\newcolumntype{L}[1]{>{\raggedright\let\newline\\\arraybackslash\hspace{0pt}}m{#1}}
\newcolumntype{C}[1]{>{\centering\let\newline\\\arraybackslash\hspace{0pt}}m{#1}}
\newcolumntype{R}[1]{>{\raggedleft\let\newline\\\arraybackslash\hspace{0pt}}m{#1}}
\let\MYcaption\@makecaption
\let\@makecaption\MYcaption
\let\oldgls\gls
\let\oldglspl\glspl
\newcommand\fussy@ifnextchar[3]{%
	\let\reserved@d=#1%
	\def\reserved@a{#2}%
	\def\reserved@b{#3}%
	\futurelet\@let@token\fussy@ifnch}
\def\fussy@ifnch{%
	\ifx\@let@token\reserved@d
		\let\reserved@c\reserved@a
	\else
		\let\reserved@c\reserved@b
	\fi
	\reserved@c}
\renewcommand{\gls}[1]{%
\oldgls{#1}\fussy@ifnextchar.{\@checkperiod}{\@}}
\renewcommand{\glspl}[1]{%
\oldglspl{#1}\fussy@ifnextchar.{\@checkperiod}{\@}}
\newcommand{\@checkperiod}[1]{%
	\ifnum\sfcode`\.=\spacefactor\else#1\fi
}
\newacronym{wrt}{w.r.t.}{with respect to}
\newacronym{RHS}{R.H.S.}{right-hand side}
\newacronym{LHS}{L.H.S.}{left-hand side}
\newacronym{iid}{i.i.d.}{independent and identically distributed}
\let\saved@bibitem\@bibitem\makeatother
\let\@bibitem\saved@bibitem\makeatother
\crefname{equation}{}{}
\Crefname{equation}{}{}
\crefname{claim}{claim}{claims}
\crefname{step}{step}{steps}
\crefname{line}{line}{lines}
\crefname{condition}{condition}{conditions}
\crefname{dmath}{}{}
\crefname{dseries}{}{}
\crefname{dgroup}{}{}
\crefname{Problem}{Problem}{Problems}
\crefname{Theorem}{Theorem}{Theorems}
\crefname{Corollary}{Corollary}{Corollaries}
\crefname{Proposition}{Proposition}{Propositions}
\crefname{Lemma}{Lemma}{Lemmas}
\crefname{Definition}{Definition}{Definitions}
\crefname{Example}{Example}{Examples}
\crefname{Assumption}{Assumption}{Assumptions}
\crefname{Remark}{Remark}{Remarks}
\crefname{Rem}{Remark}{Remarks}
\crefname{remarks}{Remarks}{Remarks}
\crefname{Appendix}{Appendix}{Appendices}
\crefname{Supplement}{Supplement}{Supplements}
\crefname{Exercise}{Exercise}{Exercises}
\crefname{Theorem_A}{Theorem}{Theorems}
\crefname{Corollary_A}{Corollary}{Corollaries}
\crefname{Proposition_A}{Proposition}{Propositions}
\crefname{Lemma_A}{Lemma}{Lemmas}
\crefname{Definition_A}{Definition}{Definitions}
		\let\Cref\crtCref
		\let\cref\crtcref
\newtheoremstyle{mystyle}{0pt}{0pt}{\upshape}{1em}{\itshape}{:}{ }{}
\theoremstyle{mystyle}
		\newtheorem{Theorem}{Theorem}
		\newtheorem{Corollary}{Corollary}
		\newtheorem{Proposition}{Proposition}
		\newtheorem{Lemma}{Lemma}
		\newtheorem{Theorem}{Theorem}
		\newtheorem{Corollary}[Theorem]{Corollary}
		\newtheorem{Proposition}[Theorem]{Proposition}
	\newtheorem{Remark}{Remark}
\theoremstyle{remark}
\theoremstyle{plain}
\newcommand{\qednew}{\nobreak \ifvmode \relax \else
		\ifdim\lastskip<1.5em \hskip-\lastskip
			\hskip1.5em plus0em minus0.5em \fi \nobreak
		\vrule height0.75em width0.5em depth0.25em\fi}
\NewDocumentCommand{\movedownsub}{e{^_}}{%
	\IfNoValueTF{#1}{%
		\IfNoValueF{#2}{^{}}
	}{%
		^{#1}
	}%
	\IfNoValueF{#2}{_{#2}}
}
\let\latexchi\chi
\RenewDocumentCommand{\chi}{}{\latexchi\movedownsub}
\newcommand{\ba}{\mathbf{a}}
\newcommand{\bA}{\mathbf{A}}
\newcommand{\bb}{\mathbf{b}}
\newcommand{\bB}{\mathbf{B}}
\newcommand{\bc}{\mathbf{c}}
\newcommand{\bC}{\mathbf{C}}
\newcommand{\bD}{\mathbf{D}}
\newcommand{\be}{\mathbf{e}}
\newcommand{\bE}{\mathbf{E}}
\newcommand{\boldf}{\mathbf{f}}
\newcommand{\bF}{\mathbf{F}}
\newcommand{\bh}{\mathbf{h}}
\newcommand{\bH}{\mathbf{H}}
\newcommand{\bI}{\mathbf{I}}
\newcommand{\bn}{\mathbf{n}}
\newcommand{\bN}{\mathbf{N}}
\newcommand{\bQ}{\mathbf{Q}}
\newcommand{\bR}{\mathbf{R}}
\newcommand{\bS}{\mathbf{S}}
\newcommand{\bu}{\mathbf{u}}
\newcommand{\bV}{\mathbf{V}}
\newcommand{\bW}{\mathbf{W}}
\newcommand{\bX}{\mathbf{X}}
\newcommand{\by}{\mathbf{y}}
\newcommand{\bY}{\mathbf{Y}}
\newcommand{\bZ}{\mathbf{Z}}
\DeclareSymbolFont{bsfletters}{OT1}{cmss}{bx}{n}
\DeclareSymbolFont{ssfletters}{OT1}{cmss}{m}{n}
\DeclareMathSymbol{\bsfGamma}{0}{bsfletters}{'000}
\DeclareMathSymbol{\ssfGamma}{0}{ssfletters}{'000}
\DeclareMathSymbol{\bsfDelta}{0}{bsfletters}{'001}
\DeclareMathSymbol{\ssfDelta}{0}{ssfletters}{'001}
\DeclareMathSymbol{\bsfTheta}{0}{bsfletters}{'002}
\DeclareMathSymbol{\ssfTheta}{0}{ssfletters}{'002}
\DeclareMathSymbol{\bsfLambda}{0}{bsfletters}{'003}
\DeclareMathSymbol{\ssfLambda}{0}{ssfletters}{'003}
\DeclareMathSymbol{\bsfXi}{0}{bsfletters}{'004}
\DeclareMathSymbol{\ssfXi}{0}{ssfletters}{'004}
\DeclareMathSymbol{\bsfPi}{0}{bsfletters}{'005}
\DeclareMathSymbol{\ssfPi}{0}{ssfletters}{'005}
\DeclareMathSymbol{\bsfSigma}{0}{bsfletters}{'006}
\DeclareMathSymbol{\ssfSigma}{0}{ssfletters}{'006}
\DeclareMathSymbol{\bsfUpsilon}{0}{bsfletters}{'007}
\DeclareMathSymbol{\ssfUpsilon}{0}{ssfletters}{'007}
\DeclareMathSymbol{\bsfPhi}{0}{bsfletters}{'010}
\DeclareMathSymbol{\ssfPhi}{0}{ssfletters}{'010}
\DeclareMathSymbol{\bsfPsi}{0}{bsfletters}{'011}
\DeclareMathSymbol{\ssfPsi}{0}{ssfletters}{'011}
\DeclareMathSymbol{\bsfOmega}{0}{bsfletters}{'012}
\DeclareMathSymbol{\ssfOmega}{0}{ssfletters}{'012}
\newcommand{\bLambda}{\bm{\Lambda}}
\newcommand{\bPhi}{\bm{\Phi}}
\newcommand*\rel@kern[1]{\kern#1\dimexpr\macc@kerna}
\newcommand*\widebar[1]{%
  \begingroup
  \def\mathaccent##1##2{%
    \rel@kern{0.8}%
    \overline{\rel@kern{-0.8}\macc@nucleus\rel@kern{0.2}}%
    \rel@kern{-0.2}%
  }%
  \macc@depth\@ne
  \let\math@bgroup\@empty \let\math@egroup\macc@set@skewchar
  \mathsurround\z@ \frozen@everymath{\mathgroup\macc@group\relax}%
  \macc@set@skewchar\relax
  \let\mathaccentV\macc@nested@a
  \macc@nested@a\relax111{#1}%
  \endgroup
}
\DeclareMathOperator*{\argmax}{arg\,max}
\DeclareMathOperator*{\argmin}{arg\,min}
\DeclareMathOperator{\diag}{diag}
\DeclareMathOperator{\tr}{tr}
\DeclareMathOperator{\vect}{vec}
\DeclarePairedDelimiterX\ip[2]{\langle}{\rangle}{#1,#2}
\DeclarePairedDelimiterX\norm[1]{\lVert}{\rVert}{#1}
\DeclarePairedDelimiterXPP\col[1]{\operatorname{col}}{\{}{\}}{}{#1} 
\DeclarePairedDelimiterXPP\row[1]{\operatorname{row}}{\{}{\}}{}{#1} 
\DeclarePairedDelimiterXPP\erf[1]{\operatorname{erf}}{(}{)}{}{#1}
\DeclarePairedDelimiterXPP\erfc[1]{\operatorname{erfc}}{(}{)}{}{#1}
\DeclarePairedDelimiterXPP\op[2]{\operatorname{#1}}{(}{)}{}{#2} 
\newcommand{\bone}{\bm{1}}
\DeclarePairedDelimiterX\Set[2]\{\}{%

#2
}
\DeclarePairedDelimiterX\Setc[1]\{\}{%

#1
}
\NewDocumentCommand\set{s o m}{%
	\IfBooleanTF#1%
	{\IfValueTF{#2}{\Set*{#2}{#3}}{\Setc*{#3}}}%
	{\IfValueTF{#2}{\Set{#2}{#3}}{\Setc{#3}}}%
}
\NewDocumentCommand{\evalat}{s O{\big} m m}{%
\IfBooleanTF{#1}%
{{\left. #3 \right|_{#4}}}
{{#3#2|_{#4}}}%
}
\NewDocumentCommand \ifcond {m m} {%
	{#1} %
	\IfValueT{#2}{\, \middle|\, {#2}}%
}
\DeclareDocumentCommand \P {e{_} g >{\SplitArgument{ 1 }{ @| }}d() g } {%
	\mathbb{P}%
	\IfValueTF{#1}{_{#1}}
	{\IfValueT{#2}{_{#2}}}%
	\IfValueT{#3}{\left(\ifcond#3}%
	\IfValueT{#4}{\, \middle|\, {#4}}%
	\IfValueT{#3}{\right)}%
}
\DeclareDocumentCommand \E {e{_} g >{\SplitArgument{ 1 }{ @| }}o g } {%
\mathbb{E}%
\IfValueTF{#1}{_{#1}}
{\IfValueT{#2}{_{#2}}}%
\IfValueT{#3}{\left[\ifcond#3}%
\IfValueT{#4}{\, \middle|\, {#4}}%
\IfValueT{#3}{\right]}%
}
\NewDocumentCommand \dist {m o o} {%
\mathrm{#1}\left(%
	\IfValueT{#3}{%
		\tl_if_blank:nTF{ #3 }{\cdot\, \middle|\, }{#3\, \middle|\, }%
	}
	\IfValueT{#2}{#2}%
\right)%
}
\NewDocumentCommand {\cbrace} { D[]{black} d[] D(){\widthof{#5}} m m } {%
	\begingroup%
		\color{#1}
		\IfValueTF{#2}{%
			\overbrace{{\color{#1}#4}}^%
		}{
			\underbrace{#4}_%
		}%
		{\parbox[c]{#3}{\centering\footnotesize{#5}}}%
	\endgroup%
}
\let\oldforall\forall
\renewcommand{\forall}{\oldforall \, }
\let\oldexist\exists
\renewcommand{\exists}{\oldexist \, }
\newcommand{\includeCroppedPdf}[2][]{%
	\IfFileExists{./Figures/#2-crop.pdf}{}{%
		\immediate\write18{pdfcrop ./Figures/#2 ./Figures/#2-crop.pdf}}%
	\includegraphics[#1]{./Figures/#2-crop.pdf}}
\definecolor{gray90}{gray}{0.9}
	\newcommand{\msout}[1]{\text{\color{green} \sout{\ensuremath{#1}}}}
	\newcommand{\del}[1]{{\color{green}\ifmmode \msout{#1}\else\sout{#1}\fi}}
	\newcommand{\msout}[1]{#1}
	\newcommand{\del}[1]{#1}
\newcommand{\hhide}[1]{}
	\def\@testdef #1#2#3{%
		\def\reserved@a{#3}\expandafter \ifx \csname #1@#2\endcsname
			\reserved@a  \else
			\typeout{^^Jlabel #2 changed:^^J%
				\meaning\reserved@a^^J%
				\expandafter\meaning\csname #1@#2\endcsname^^J}%
			\@tempswatrue \fi}
  \def\R{{\mathbb{R}}} \def\C{{\mathbb{C}}}   \def\E{{\mathbb{E}}}
\def\bone{\mathbf{1}}
\newcommand{\beq}{\begin{eqnarray}}
\newcommand{\eeq}{\end{eqnarray}}
\def\cA{{\mathcal{A}}}  \def\cC{{\mathcal{C}}} \def\cD{{\mathcal{D}}}
 \def\cN{{\mathcal{N}}} \def\cO{{\mathcal{O}}} 
 \def\cR{{\mathcal{R}}}  
 \def\cV{{\mathcal{V}}}
           \def\lA{\left\|}     \def\rA{\right\|}
\renewenvironment{proof}[1][\proofname]{\par
  \pushQED{\qed}%
  \normalfont \topsep0\p@\relax
  \trivlist
  \item[\hskip3\labelsep\itshape#1\@addpunct{:}]\ignorespaces}{%
  \popQED\endtrivlist\@endpefalse
}
\newacronym{MLE}{MLE}{maximum likelihood estimate}
\begin{document}

\pagenumbering{arabic}

\title{MMV-Based Sequential AoA and AoD Estimation for Millimeter Wave MIMO Channels}
\author{Wei Zhang, Miaomiao Dong, and Taejoon Kim
\thanks{

{W. Zhang is with the School of Electrical and Electronic Engineering, Nanyang Technological University, Singapore (e-mail: weizhang@ntu.edu.sg).}
{M. Dong was with the Department of Electrical Engineering, City University of Hong Kong and the Department of Electrical Engineering and Computer Science, University of Kansas, KS 66045, USA (email: miao4600@163.com).}
{T. Kim is with the Department of Electrical Engineering and Computer Science, University of Kansas, KS 66045, USA (e-mail: taejoonkim@ku.edu).}
}
}
\maketitle
\vspace{-0.4cm}
\begin{abstract}
The fact that the millimeter-wave (mmWave) multiple-input multiple-output (MIMO) channel has sparse support in the spatial domain has motivated recent compressed sensing (CS)-based mmWave channel estimation methods, where the angles of arrivals (AoAs) and angles of departures (AoDs) are quantized using angle dictionary matrices. However, the existing CS-based methods usually obtain the estimation result through one-stage channel sounding that have two limitations: (i) the requirement of large-dimensional dictionary and (ii) unresolvable quantization error. These two drawbacks are irreconcilable; improvement of the one implies deterioration of the other. To address these challenges, we propose, in this paper, a two-stage method to estimate the AoAs and AoDs of mmWave channels. In the proposed method, the channel estimation task is divided into two stages, Stage I and Stage II. Specifically, in Stage I, the AoAs are estimated by solving a multiple measurement vectors (MMV) problem. In Stage II, based on the estimated AoAs, the receive sounders are designed to estimate AoDs. The dimension of the angle dictionary in each stage can be reduced, which in turn reduces the computational complexity substantially. We then analyze the successful recovery probability (SRP) of the proposed method, revealing the superiority of the proposed framework over the existing one-stage CS-based methods. We further enhance the reconstruction performance by performing resource allocation between the two stages. We also overcome the unresolvable quantization error issue present in the prior techniques by applying the atomic norm minimization method to each stage of the proposed two-stage approach. The simulation results illustrate the substantially improved performance with low complexity of the proposed two-stage method.

\end{abstract}
\vspace{-0.3cm}
\begin{IEEEkeywords}
Millimeter wave communications, compressed sensing, channel estimation, multiple-input multiple-output system, support recovery, and sequential estimation.
\end{IEEEkeywords}

\section{Introduction}

The spectrum-rich millimeter-wave (mmWave) frequencies between $30-300$ GHz have the potential to alleviate the current spectrum crunch in sub-6GHz bands that service providers are already experiencing. This major potential of the mmWave band has made it one of the most important components of future mobile cellular and emerging WiFi networks.
However, due to significant differences between systems
operating in mmWave and legacy sub-6 GHz bands, providing reliable and low-delay communication in
the mmWave bands is extremely challenging. Specifically,
to achieve the high spectral efficiency of mmWave communications, accurate channel state information
(CSI) is the key \cite{spatially,Heath16, Hur13,ZhangSD,ZhangSequ}, which is, however, challenging due to the high dimensionality of the channel as well as the mmWave hardware constraints.

Nevertheless, the mmWave  multiple-input multiple-output (MIMO) channel exhibits sparse property \cite{5gWhite,hur2016proposal}, facilitating the sparse channel representation by using small numbers of the angles of arrivals
(AoAs), angles of departures (AoDs), and path gains.
Typically,  by approximating the AoAs and AoDs to be on quantized angle grids, the compressed sensing (CS)-based approaches transform the AoA and AoD estimation problem to a sparse signal recovery problem \cite{alk,OMPchannel},
where the transmitter sends the channel sounding beams to the receiver and the receiver jointly estimates AoAs and AoDs. We refer to this method as the one-stage channel sounding scheme.
In particular, due to easy implementation and amenability for analysis, the orthogonal matching pursuit (OMP) has been widely studied \cite{OMPchannel,MultiOMP,cstOMP,jOMP,duan}.
The OMP iteratively searches a pair of AoA and AoD over an over-complete dictionary.
However, the computational complexity of OMP  increases quadratically with the sizes of the dictionaries, i.e., $O(L K G_r G_t)$, where $K$ is the number of channel uses for the channel sounding,
$L$ is the number of channel paths, and $G_r$ and $G_t$ are the dimensions of angle dictionaries for AoA and AoD, respectively.
It is worth pointing out that when the dimensions of the over-complete dictionaries, i.e., $G_r$ and $G_t$,
 increase, the complexity of the one-stage CS-based methods such as OMP becomes exceedingly impractical.

The over-complete dictionary and high computational complexity issues have been addressed in an adaptive-CS point-of-view with the primary focus on the sensing vector adaptation to the previous observations \cite{Hur13,alk,twoStageC}. Theoretically, it has been shown that the adaptive CS can be benificial in low SNR \cite{OptimalAdaptive}.
The multi-level (hierarchical) AoA and AoD search techniques \cite{Hur13,alk} leveraged the feedback, where the receiver conveys a feedback to the transmitter to guide the next level angle dictionary design. It is worth noting that these adaptation methods \cite{Hur13,alk} need multiple feedbacks and its performance critically relies on the reliability of the feedback.
To reduce the feedback overhead, a two-stage CS was proposed in \cite{twoStageC}, where the first stage is to obtain a coarse estimation of the support set and the second stage refines the result of the first stage. This method \cite{twoStageC} only requires one-time feedback, but achieves compatible estimation performance in low SNR.

\subsection{Our Contributions}
We newly study a sequential, two-stage AoA and AoD estimation framework for reduced computational complexity and improved estimation performance.
Specifically, in Stage I, the support set of AoAs is recovered  at the receiver by solving a multiple measurement vectors (MMV) problem.
Leveraging the shared sparse set, it has been found that the MMV approach can provide improved estimation performance compared to the single measurement vector (SMV) approach \cite{ChenMMV,vanMMV,LeeMMV}.
In Stage II, the receiver estimates the AoDs of the channel by exploiting the estimated AoAs from Stage I.
Importantly, the estimated AoAs guide the design of receive sounding signals, which saves the channel use overhead and improves the accuracy of AoD estimation.
In each stage, since we only estimate AoAs or AoDs,
the dimensions of the signal and angle dictionary
are much smaller than those of
the one-stage joint AoA and AoD estimation
\cite{OMPchannel,cstOMP,jOMP},
readily reducing the computational complexity substantially. This can be viewed as of converting the multiplicative channel sounding overhead (e.g., $\cO(G_r G_t)$ of OMP) to an additive overhead.

By analyzing the MMV statistics, we present a lower bound for the successful probability of recovering the support sets.
Furthermore, based on the successful recovery probability (SRP) analysis of the proposed two-stage method, a resource allocation (between Stage I and Stage II) strategy is newly proposed to improve SRPs for both AoA and AoD estimation.
The numerical results validate the efficacy of the proposed resource allocation method.

Finally, in order to address the issue of unresolvable quantization error, we extend the proposed two-stage method to the one with super resolution. Specifically, in each stage of AoA or AoD estimation, we reformulate the MMV problem as an atomic norm minimization problem \cite{OffGridCS,MmvAtomic,superMM}, which is solved by using alternating direction method of multipliers (ADMM). Compared to the dictionary-based methods, the atomic norm minimization can be thought of as the case when the infinite dictionary matrix is employed.
We demonstrate through simulations that the quantization error of the two-stage method with super resolution can be effectively reduced.

\subsection{Paper Organization and Notations}
The paper is organized as follows. In Section \ref{section model}, we introduce the signal model and the CS-based channel estimation problem. In Section \ref{section algorithm}, based on the angular-domain channel representation, the proposed sequential AoA and AoD estimation method is presented.
In Section \ref{sec analyze}, we analyze the proposed method in terms of SRP and introduce the resource allocation strategy. In Section \ref{section atomic}, the atomic norm-based design is described, which resolves the quantization error in the estimated AoAs and AoDs.
The simulation results and conclusion are presented in Section \ref{section simulation} and Section \ref{section conclusion}, respectively.

\emph{Notations:} A bold lower case letter $\mathbf{a}$ is a vector and a bold capital letter $\bA$ is a matrix. ${{\bA}^{T}}$, ${{\bA}^{*}}$, ${{\bA}^{H}}$, ${{\bA}^{-1}}$, $\tr(\bA)$, $\left| \bA \right|$,  ${{\left\| \bA \right\|}_{F}}$ and ${{\left\| \mathbf{a} \right\|}_{2}}$ are, respectively, the transpose, conjugate, Hermitian, inverse, trace, determinant, Frobenius norm of $\bA$, and $\ell_2$-norm of $\mathbf{a}$.
$\bA^{\dagger} = (\bA^H \bA)^{-1}\bA^H$ denotes the pseudo inverse of a tall matrix $\bA$.
${{[\bA]}_{:.i}}$, ${{[\bA]}_{i,:}}$, ${{[\bA]}_{i,j}}$, and $[\ba]_i$ are, respectively, the $i$th column, $i$th row, $i$th row and $j$th column entry of $\bA$, and $i$th entry of vector $\mathbf{a}$. $\mathrm{\mathop{vec}}(\bA)$ stacks the columns of $\bA$ and forms a long column vector.
$\mathrm{\mathop{diag}}(\mathbf{a})$ returns a square diagonal matrix with the vector $\ba$ on the main diagonal.
 ${{\mathbf{I}}_{M}}\in {{\mathbb{R}}^{M\times M}}$ is the $M$-dimensional identity matrix.
 {The $\mathbf{1}_{M,N}  \in  \R^{M\times N}$ and $\mathbf{0}_{M,N}  \in  \R^{M\times N}$ are the all one matrix, and zero matrix, respectively.}
 $\cR(\bF)$ denotes the subspace spanned by the columns of matrix $\bF$. $\bA\otimes \mathbf{B}$ and  $\bA \circ \mathbf{B}$ denote the Kronecker product and Khatri-Rao product of $\bA$ and $\mathbf{B}$, respectively. The $\lceil x \rceil$ returns the smallest integer greater than or equal to $x$.

\section{System Model and General Statement of Techniques} \label{section model}

\subsection{Channel Model}
The mmWave transmitter and receiver  are equipped with $N_t$ and $N_r$ antennas, respectively.
Suppose that the number of separable paths between the transmitter and receiver is $L$, where $L\ll \min\{ N_r, N_t \}$.
The physical mmWave channel representation based on the uniform linear array \cite{li2017millimeter,OMPchannel,wan2019compressive,zhang2020downlink} is given by\footnote{
In wideband communication systems, one can model the channel as constant AoA/AoD and varying path gains \cite{qin2018time,park2018spatial}. Here we could also assume a narrow band block fading channel where the channel is static during the channel coherence time. The CSI acquisition and data transfer are framed to happen within the channel coherence time \cite{li2017millimeter,OMPchannel,wan2019compressive,zhang2020downlink}.
},
\begin{align}
\bH=\sqrt{\frac{N_rN_t}{L}}\sum\limits_{l=1}^{L}{{{\alpha }_{l}}}\mathbf{a}_r({f_{r,l}}){\ba_t^H}({f_{t,l}}), \label{channel model}
\end{align}
where  $\mathbf{a}_t(\cdot)\in \C^{N_t \times 1}$ and $\mathbf{a}_r(\cdot) \in \C^{N_r \times 1}$  are the array response vectors of the transmit and receive antenna arrays. Specifically,  $\mathbf{a}_t(f)$ and  $\mathbf{a}_r(f)$ are given by
$\mathbf{a}_t(f)=\frac{1}{\sqrt{N_t}}{{\left[ 1,{{e}^{j2\pi f}},\ldots ,{{e}^{j2\pi (N_t-1)f}} \right]}^{T}} $
and
$\mathbf{a}_r(f)=\frac{1}{\sqrt{N_r}}{{\left[ 1,{{e}^{j2\pi f}},\ldots ,{{e}^{j2\pi (N_r-1)f}} \right]}^{T}}$, where $f\in [0,1)$ is the normalized spatial angle. Here we assume ${{f}_{r,l}} $ and ${{f}_{t,l}}$ in \eqref{channel model} are independent and uniformly distributed in $[0,1 )$, and the gain of the $l$th path $\alpha_l$ follows the complex Gaussian distribution, i.e., ${{\alpha }_{l}} \sim \mathcal{C}\mathcal{N}(0,\sigma_l^2)$.
Angular domain representation of the channel in \eqref{channel model} can be rewritten as
\begin{align}
\bH={{\bA}_{r}}\diag(\mathbf{h})\bA_{t}^H, \label{compact channel}
\end{align}
where ${{\bA}_{r}}=[\mathbf{a}_r({{f}_{r,1}}),\ldots ,\mathbf{a}_r({{f}_{r,L}})]\in {\bC^{N_r\times L}}$ , ${{\bA}_{t}}=[\mathbf{a}_t({{f}_{t,1}}),\ldots ,\mathbf{a}_t({{f}_{t,L}})]\in {\bC^{N_t\times L}}$, and $\bh =[h_1,\ldots,h_L]\in \C^{L \times 1}$ with $h_l=\sqrt{\frac{N_rN_t}{L}}{{\alpha }_{l}}$, $l=1,\ldots ,L$.

\begin{figure}
\centering
\includegraphics[width=.7\textwidth]{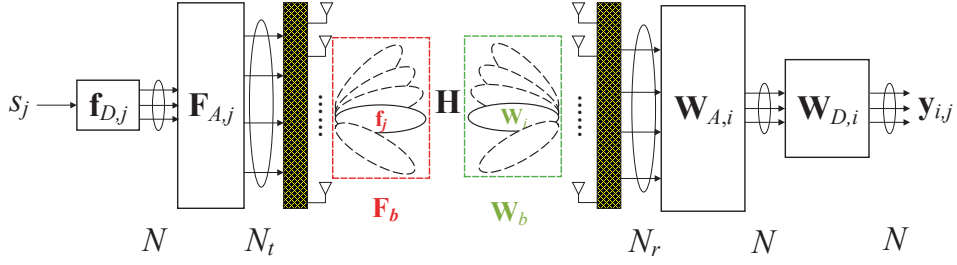}
\caption{Conventional one-stage mmWave channel sounding} \label{system diagram}
\end{figure}

\subsection{Channel Sounding} \label{section channel sounding}
Fig. \ref{system diagram} illustrates the conventional one-stage mmWave channel sounding operation, where the transmitter and receiver are equipped with the large-dimensional hybrid analog-digital MIMO arrays that are driven by a limited number of RF chains, i.e., $N\ll \min\{ N_t, N_r \}$.
In each channel use of downlink channel sounding,
the transmitter generates a beam conveying the pilot signal and the receiver simultaneously generates $N$ separate beams, using the $N$ RF chains, to obtain a $N$-dimensional observation.
We let the numbers of the transmit sounding beams (TSBs) and receive sounding beams (RSBs)  for channel estimation be $B_t$ and $B_r$, respectively.
For convenience, we assume that $B_r$ is an integer multiple of $N$.
The total number of channel uses for the conventional one-stage sounding process is then $K=B_rB_t/N$.
Specifically, the RSB matrix in Fig. \ref{system diagram} is given by
\begin{align}
\mathbf{W}_b=[{\bW_{1}},{\bW_{2}},\ldots ,{\bW_{B_r/N}}]\in {\bC^{N_r\times {B_r}}}, \label{RSB matrix}
\end{align}
where $\bW_i \in {\bC^{N_r\times N}}$ for $i=1,2,\ldots,B_r/N$,
and $\bW_i=\bW_{A,i}\bW_{D,i}$ with $\bW_{A,i}\in {\bC^{N_r\times N}}$ and $\bW_{D,i} \in {\bC^{N\times N}}$ being the receive analog and digital sounders, respectively.
Similarly, the TSB matrix is given by
\begin{align}
\bF_b=[{{\mathbf{f}}_{1}},{{\mathbf{f}}_{2}},\cdots ,{\mathbf{f}}_{B_t}]\in {\bC^{N_t\times {B_t}}}, \label{TSB matrix}
\end{align}
where  $\boldf_j\in {\bC^{N_t\times 1}}$ for $j=1,2,\cdots,B_t$ is the $j$th transmit sounder,
and $\boldf_j={\bF_{A,j}}{\boldf_{D,j}}s_j$  with ${\bF_{A,j}}\in {\bC^{N_t\times N}}$ and $\boldf_{D,j}\in {\bC^{N\times 1}}$ being the transmit analog and digital sounders, respectively.
 Each observation $\by_{i,j}\in \C^{N\times 1}$ in Fig. \ref{system diagram}, associated with the $i$th RSB and $j$th TSB, $i\in \{ 1,\ldots, B_r/N \}$ and $j \in \{1,2,\ldots, B_t \}$, can be expressed as
\begin{align} {\by}_{i,j}=\mathbf{W}_{i}^H\bH{{\mathbf{f}}_{j}}s_j+\mathbf{W}_{i}^H{{\mathbf{n}}_{j}}.
\label{single channel uses}
\end{align}
 The $s_j$ denotes the training signal and without loss of generality, we let $s_j=1$.
It is worth noting that only phase shifters are employed to constitute the analog arrays for power saving, where $|{{[{\bW_{A,i}}]}_{m,n}}|=1/\sqrt{N_r}$, and $|{{[{\bF_{A,j}}]}_{m,n}}|=1/\sqrt{N_t}, \forall m,n$.
Moreover,  the power constraint $\left\| {{\mathbf{f}}_{j}} \right\|_{2}^{2} =  p$ is imposed to the transmit sounding beam at each channel use with $p$ being the power budget, and the noise vector follows ${{\mathbf{n}}_{j}} \sim \mathcal{C}\mathcal{N}(\mathbf{0}_{N_r},{{\sigma }^{2}}{{\mathbf{I}}_{N_r}})$. Thus, the signal to noise ratio is $p/{{\sigma }^{2}}$.

We collect all observations in \eqref{single channel uses} by using $\bW_b$ in \eqref{RSB matrix} and $\bF_b$ in \eqref{TSB matrix} as
\begin{align}
\mathbf{Y}={\bW_b^H}\bH \bF_b+{\bW_b^H}\mathbf{N}, \label{matrix observations}
\end{align}
where $\bY \in \C^{B_r\times B_t}$ and $\bN =[ \bn_1, \ldots, \bn_{B_t} ] \in \C^{N_r \times B_t}$.
For example,
$\bW_b$ and $\bF_b$ in \eqref{matrix observations}
can be generated randomly \cite{ZhangSD} or designed as a partial discrete Fourier transform (DFT) matrix \cite{OMPchannel}.
We assume that the number of observations is strictly lower than the dimension of the channel matrix, i.e.,
${B_r}{B_t}\ll N_rN_t$.
The channel estimation task is to utilize the observations in \eqref{single channel uses} (equivalently, \eqref{matrix observations}) to obtain the estimate of the channel matrix $\bold{H}$ in \eqref{compact channel}.
Encountering \eqref{compact channel}, the channel estimation task boils down to reconstructing $\{{{f}_{r,1}},\ldots ,{{f}_{r,L}}\}$, $\{{{f}_{t,1}},\ldots ,{{f}_{t,L}}\}$ and $\{{{h}_{1}},\ldots ,{{h}_{L}}\}$ from the observations.

\subsubsection{Oracle Estimator}

The oracle estimator that we will utilize for benchmark\footnote{
Both Cramer-Rao lower bound (CRLB) \cite{Ahmed2020VTC} and the oracle estimator \cite{OMPchannel} can be utilized to evaluate the accuracy of estimation algorithms.
Since the CRLB can only be calculated for one-stage method, in this work we use the oracle estimator as the benchmark instead.} is obtained by assuming perfect knowledge of AoAs and AoDs in \eqref{compact channel}.
The oracle channel estimate only needs to estimate the path gain $\mathbf{h}$, thus the channel estimate is expressed  as $\widehat{\bH}= \bA_r \diag(\widehat{\bh}) \bA_t^H$, where $\diag(\widehat{\bh}) \in \C^{L \times 1}$ is the solution to the following problem:
\begin{align}
\widehat{\bh} = \argmin_{\bh} \| \bY - {\bW_b^H}\bA_r \diag({\bh}) \bA_t^H \bF_b\|_F^2. \label{oracal estimator}
\end{align}
Because \eqref{oracal estimator} is convex, the optimal solution is $\widehat{\bh} = (\bX^H\bX)^{-1} \bX^H \vect(\bY)$, where $\bX \in \C^{B_rB_t\times L}$ is given by
$\bX =  \left[\vect {([\bW_b^H \bA_r]_{:,1} [\bA_t^H \bF_b]_{1,:})}, \ldots,\vect{([\bW_b^H \bA_r]_{:,L} [\bA_t^H \bF_b]_{L,:})}   \right]$.
Because we have $B_r B_t \gg L$, $\bX^H \bX$ is invertible.

\subsection{Compressed Sensing-Based Channel Estimation}\label{traditional CS}

Recalling the channel model in \eqref{compact channel},
a typical CS framework restricts the normalized spatial angles ${{f}_{r,l}},{{f}_{t,l}},~l=1,2,\ldots, L$, to be chosen from the discrete angle dictionaries,
$
{{f}_{r,l}}\in \left[0,{1}/{G_r},\ldots, {(G_r-1)}/{G_r}\right]$, and ${{f}_{t,l}}\in \left[0,{1}/{G_t},\ldots, {(G_t-1)}/{G_t}\right] $,
where $G_r=\lceil sN_r \rceil$ and $G_t=\lceil s N_t \rceil$ with $s \ge 1$ are, respectively, the cardinalities of the receive and transmit spatial angle dictionaries.
The transmit and receive array response dictionaries are then given by
\begin{align}
~~~\bar{\bA}_r=\left[\mathbf{a}_r(0),\mathbf{a}_r\left(\frac{1}{G_r}\right),\ldots ,\mathbf{a}_r\left(\frac{G_r-1}{G_r}\right)\right]\in {\bC^{N_r\times G_r}}\nonumber
\end{align}
and
\begin{align}	
~~~\bar{\bA}_t=\left[\mathbf{a}_t(0),\mathbf{a}_t\left(\frac{1}{G_t}\right),\ldots ,\mathbf{a}_t\left(\frac{G_t-1}{G_t}\right)\right]\in {\bC^{N_t\times G_t}}. \nonumber
\end{align}
For the latter array response dictionaries, the channel model in \eqref{compact channel} can be rewritten as
\begin{align}
\bH= \bar{\bA}_r{\bar{\bH}_a}\bar{\bA}_t^H + \bE, 	\label{redundant channel estimation}
\end{align}
where ${\bar{\bH}_a}\in {\bC^{G_r\times G_t}}$ is an $L$-sparse matrix with $L$ non-zero entries corresponding to the positions of AoAs and AoDs on their respective  angle grids, and $\bE \in \C^{N_r \times N_t}$ denotes the quantization error.

 Because the dictionary matrices $\bar{\bA}_r$ and $\bar{\bA}_t$ are known, the channel estimation task is equivalent to estimating the non-zero entries in $\bar{\bH}_a$. Plugging the model in \eqref{redundant channel estimation} into \eqref{matrix observations} gives
\begin{align}
\mathbf{Y}={\bW_b^H}\bar{\bA}_r({\bar{\bH}_a}+\bE)\bar{\bA}_t^H\bF_b+{\bW_b^H}\mathbf{N}.  \label{bold Y}	\end{align}
Vectorizing $\bY$ in \eqref{bold Y} yields
\begin{align}
     \vect(\mathbf{Y})
 & =({\bF_b^{T}}\bar{\bA}_t^*\otimes {\bW_b^H}\bar{\bA}_r)(\vect({\bar{\bH}_a} + \bE))+\vect({\bW_b^H}\mathbf{N}).   \label{vector cs}
\end{align}
Denoting $\mathbf{D}={\bF_b^{T}}\bar{\bA}_t^*\otimes {\bW_b^H}\bar{\bA}_r\in {\bC^{{B_r}{B_t}\times G_r{{G}_{b}}}}$ and $\bar{\mathbf{n}}=\bD \vect(\bE)+\vect({\bW_b^H}\mathbf{N})\in {\bC^{{B_r}{B_t}\times 1}}$ gives
$
\vect(\mathbf{Y})=\mathbf{D}\vect({\bar{\bH}_a})+\bar{\mathbf{n}}$.
Hence, the estimation of $\vect({\bar{\bH}_a})$ from \eqref{vector cs} can be stated as a sparse signal reconstruction problem:
\begin{align}
 \min_{\bar{\bH}_a}  \| \vect(\mathbf{Y}) - \mathbf{D}\vect({\bar{\bH}_a}) \|_2 ~\text{subject to } \| \vect( {\bar{\bH}_a})  \|_0=L, \label{vector observations}
\end{align}
where $\|\cdot \|_0$ is the  $\ell_0$-norm that returns the number of non-zero coordinates of a vector. The problem in \eqref{vector observations} can be solved by using standard CS methods \cite{DonohoCS,TroppOMP}.

The number of required observations to reconstruct $L$-sparse vector $\vect(\bar\bH_a)\in \C^{G_r G_t \times 1} $ in \eqref{vector observations} has previously characterized as $O\left( L\cdot \log (G_rG_t) \right)$ \cite{DonohoCS}, which is much smaller than $O(N_r N_t)$.
However, the computational complexity for estimating $\vect(\bar{\bH}_a)$ in \eqref{vector observations} by using OMP, for example, is $O(LB_rB_t G_r G_t)$.
Though the quantization error associated with using dictionaries can be made small by increasing the sizes of the dictionaries, the growing computational complexity remains a critical challenge.
Instead of developing another one-stage channel sounding method (as in Fig. \ref{system diagram}), we propose a new two-stage channel sounding and estimation framework to overcome the large overhead and complexity drawbacks.

\begin{figure}
\centering
\includegraphics[width=.57\textwidth]{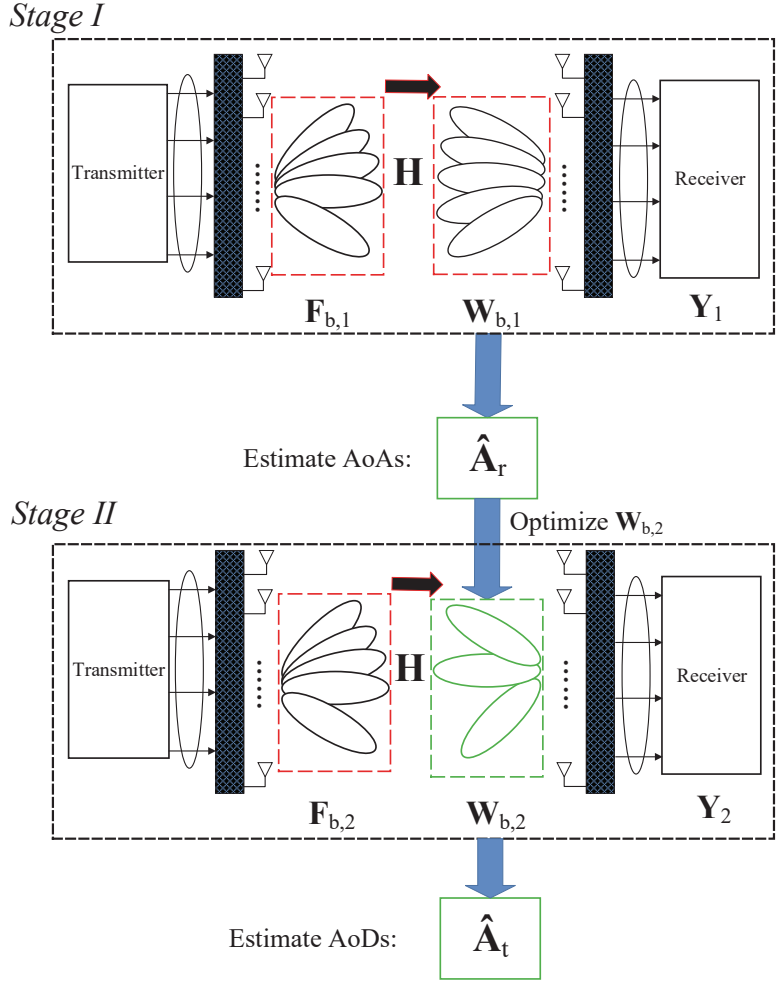}
\caption{Illustration of the proposed two-stage AoA and AoD estimation.} \label{alg diagram}
\end{figure}

\section{Two-stage AoA and AoD Estimation} \label{section algorithm}
A conceptual diagram of the proposed two-stage AoA and AoD estimation framework is presented in Fig. \ref{alg diagram}. The proposed sequential technique has constituent two stages of channel sounding, where each stage exclusively exploits much low-dimensional dictionary compared to the one-stage channel sounding in Fig. \ref{system diagram}.

Under the similar definitions of one-stage method in \eqref{matrix observations}, in Stage I of the two-stage framework of Fig. \ref{alg diagram}, the transmit and receive sounding beams are represented by ${\bF_{b,1}}\in {\bC^{N_t\times B_{t,1}}}$ and ${\bW_{b,1}}\in {\bC^{N_r\times B_{r,1}}}$, respectively.
 The AoA estimates of Stage I produce the estimation of array response matrix $\bA_r$ in \eqref{compact channel}, i.e., $\widehat{\bA}_r \in \C^{N_t \times L}$.
In Stage II, the transmit and receive sounding beams are denoted by ${\bF_{b,2}}\in {\bC^{N_t\times B_{t,2}}}$ and ${\bW_{b,2}}\in {\bC^{N_r\times B_{r,2}}}$, respectively.
In particular, the receive sounding beams ${\bW_{b,2}}$ is optimized based on the estimated AoA array response matrix $\widehat{\bA}_r$ at Stage I, which leads to improved estimation accuracy as our analysis and simulation show.
The total number of observations is given by
$N_p = B_{t,1}B_{r,1}+B_{t,2}B_{r,2}.$
Accordingly, the total number of channel uses is $K=(B_{t,1}B_{r,1}+B_{t,2}B_{r,2})/N$.

\subsection{Stage I: AoA Estimation}
We rewrite the channel model in \eqref{redundant channel estimation} as
  $ \bH
   =\bar{\bA}_r\bar{\bH}_a\bar{\bA}_t^H +\bE
 =\bar{\bA}_r{{\mathbf{Q}}_{r}}+\bE$,
where ${{\mathbf{Q}}_{r}}\in {\bC^{G_r\times N_t}}$ has $L$ non-zero rows, whose indices are collected into the support set $\Omega_r \subset \{1,2,\ldots,G_r \}$  and $|\Omega_r| = L$.
Using $\Omega_r$, the $\bA_r$ in \eqref{compact channel} can be written using the columns of $\bar{\bA}_r$ indexed by $\Omega_r$ as $[\bar{\bA}_r]_{:,\Omega_r} = \bA_r$. \par
To estimate the AoAs, we need to recover the support set $\Omega_r$.
Similar to the one-stage sounding in \eqref{matrix observations}, at Stage I in Fig.~\ref{alg diagram}, the observations ${{\mathbf{Y}}_{1}} \in \C^{B_{r,1}\times B_{t,1}}$ is expressed as
\begin{align}	
{{\mathbf{Y}}_{1}} &=\mathbf{W}_{b,1}^H \bH {\bF_{b,1}}+\mathbf{W}_{b,1}^H\mathbf{N}_1\nonumber\\
 &=\mathbf{W}_{b,1}^H\bar{\bA}_r{{\mathbf{Q}}_{r}}{\bF_{b,1}}+ \mathbf{W}_{b,1}^H\bE{\bF_{b,1}} +\mathbf{W}_{b,1}^H\mathbf{N}_1\nonumber\\
 &=\bPhi_1 \bC_1+ \mathbf{W}_{b,1}^H\bE{\bF_{b,1}} +\mathbf{W}_{b,1}^H\mathbf{N}_1,
\label{MMV observation}
\end{align}
where $\bPhi_1 =\bW_{b,1}^H\bar{\bA}_r  \in \C^{B_{r,1} \times G_r}$, $\bC_1 =  {{\mathbf{Q}}_{r}}{\bF_{b,1}} \in \C^{G_r \times B_{t,1}}$, and $\mathbf{N}_1\in {\bC^{N_r\times B_{t,1}}}$ is the noise matrix with \gls{iid} entries according to ${{[\mathbf{N}_1]}_{i,j}}\sim \mathcal{C}\mathcal{N}(0,{{\sigma }^{2}})$, $\forall i,j$.
Due to the row sparsity of $\bQ_r$,  it is clear that $\bC_1$ also has $L$ non-zero rows indexed by $\Omega_r$.
If $B_{t,1}=1$, the recovery of $\bC_1$  in \eqref{MMV observation} can be formulated as a common SMV CS problem. When $B_{t,1}>1$, it becomes an MMV CS problem \cite{MMV_Tropp}, where the multiple columns of  $\bC_1$ in \eqref{MMV observation} shares a common support.
The optimization problem estimating the row support of $\bC_1$ for MMV is now given by
\begin{align}
    \widehat{\bC}_1  =  \underset{\bC_1}{\mathop{\argmin }}\,\left\| {{\mathbf{Y}}_{1}}- \bPhi_1 \bC_1 \right\|_{F}^{2}\text{~~subject to }{{\left\|\bC_1\right\|}_{r,0}} \le  L,  \label{AoA OMP}
\end{align}
where $\left\|\bC_1\right\|_{r,0}$ is defined as the number of non-zero rows of $\bC_1$.
Using a similar method as the OMP,  the problem in \eqref{AoA OMP} can be solved by simultaneous OMP (SOMP) \cite{sompJ} that is described in Algorithm 1. The output is the estimated support set $\widehat{\Omega}_r$\footnote{
Here, we assume the number of paths is known as a priori for convenience of performance analysis in Section \ref{sec analyze}. When the number of paths is unavailable as a priori, a threshold can be introduced to compare with the power of the residual matrix $\bR^{(l)}$ in Step 8 at each iteration \cite{TroppGreedy,zhang2021successful}. When the power of $\bR^{(l)}$ is less then the threshold, Algorithm \ref{alg_SOMP} terminates, which generates the estimate of number of paths.
}. For notational simplicity, we omit the subscripts in $\bY_1$ and $\bPhi_1$ in Algorithm \ref{alg_SOMP}.

\begin{algorithm} [t]
\caption{Simultaneous OMP: SOMP($\bY,\bPhi,L$)}
\label{alg_SOMP}
\begin{algorithmic} [1]
\STATE Input: Observations $\bY $, measurement matrix  $\bPhi$, sparsity level $L$.
\STATE Initialization: Support set $\widehat{\Omega}^{(0)}= \emptyset$, residual matrix $\bR^{(0)} = \bY$.
\FOR{$l = 1$ to $L$}
\STATE Calculate the coefficient matrix: $\bS = \bPhi^H \bR^{(l-1)}$.
\STATE Select the largest index $\eta = \argmax \limits _{i=1,\cdots,G_r} \lA [\bS]_{i,:} \rA_2$.
\STATE Update the support set: $\widehat{\Omega}^{( l)}= \widehat{\Omega}^{( l-1)} \bigcup \eta$.
\STATE Update the recovery of matrix: $\widehat{\bC}= ([{\bPhi}]_{:,\widehat{\Omega}^{( l)} })^{\dagger}\bY$.
\STATE Update the residual matrix: $\bR^{(l)} =\bY- [{\bPhi}]_{:,\widehat{\Omega}^{( l)} }\widehat{\bC}  $.
\ENDFOR
\STATE Output: $\widehat{\Omega}^{( L)}, \widehat{\bC}$.
\end{algorithmic}
\end{algorithm}

It should be emphasized that the choice of the measurement matrix $\bPhi_1$ and  $\bC_1$  has a profound impact on the recovery performance of SOMP \cite{sompJ}. Observing \eqref{MMV observation}, the TSB $\bF_{b,1}$ is incorporated in $\bC_1$, and the RSB $\bW_{b,1}$ is included in the measurement matrix $\bPhi_1$. Thus, in what follows, the design of RSB $\bW_{b,1}$ and TSB $\bF_{b,1}$, is of interest.
\subsubsection{RSB and TSB Design} \label{sectionStageRSB}
Firstly, we focus on the design of TSB $\bF_{b,1}$. Considering $\bC_1 = \bar{\bH}_a\bar{\bA}_t^H\bF_{b,1}$, in order to guarantee that $\bF_{b,1}$ is unbiased for each item (column) in $\bar{\bA}_t$,
 we design $\bF_{b,1}$ by maximizing the minimum correlation between $\bF_{b,1}$ and each column in $\bar{\bA}_t$, which yields
\begin{align}
   \max_{\bF_{b,1}} \min_{i} {{\| \bF_{b,1}^H{{\left[ \bar{\bA}_t \right]}_{:,i}} \|}_{2}} ~~\text{subject to}~ \bF_{b,1}^H \bF_{b,1} =p_1  \bI_{B_{t,1}}, \label{AoA trans beam}
\end{align}
where $p_1$ is the power allocation of Stage I.
After taking the constraint into account, the optimal solution to the problem in \eqref{AoA trans beam} should ideally satisfy the following
${{\| \bF_{b,1}^H{{\left[ \bar{\bA}_t \right]}_{:,i}} \|}_2}=\sqrt{{p_1B_{t,1}}/{N_t}},~\text{ }i=1,\ldots, G_t$.
It means that $\bF_{b,1}$ is isometric to all columns of $\bar{\bA}_t$, which is obtained by
\begin{align}
\bF_{b,1}=\sqrt{p_1}\left[ {{\mathbf{e}}_{1}},{{\mathbf{e}}_{2}},\ldots ,{{\mathbf{e}}_{B_{t,1}}} \right],\label{F beams AoAs}
\end{align}
where  $\be_i$ the $i$th column of $\bI_{N_t}$.
The construction of $\be_j, ~ j=1, \ldots, B_{t,1}$ in \eqref{F beams AoAs} using the hybrid analog-digital array is possible due to the fact that any vector can be constructed by
linearly combining $N (\geq 2)$ RF chains \cite{xzhang}. To be more specific, there exists $\bF_{A,j}\in \C^{N_t \times N}$, $\boldf_{D,j}\in \C^{N \times 1}$, and $s_j=1$ such that $\be_{j}=\bF_{A,j} \boldf_{D,j}s_j$, i.e.,
\beq
 \be_{j} =  \underbrace{\frac{1}{\sqrt{N_t}} [ \mathbf{1}_{N_t} ~ \tilde{\mathbf{1}}_{N_t}^{(j)} ~  \mathbf{1}_{N_t} \cdots   \mathbf{1}_{N_t} ]  }_{\triangleq \bF_{A,j} }
           \underbrace{\frac{\sqrt{ N_t}}{2} \left[1,-1,0 ,\cdots , 0 \right]^T }_{\triangleq \boldf_{D,j}}\times  1, \label{design e1}
\eeq
where $\tilde{\mathbf{1}}_{N_t}^{(j)} \in \R^{N_t \times 1}$ is defined as the all one vector $\bone_{N_t}\in \R^{N_t\times 1}$ other than the $j$th entry being $-1$.

For the measurement matrix $\bPhi_1 = \bW_{b,1} \bar{\bA}_r$, we optimize $\bW_{b,1}$
by incorporating the isometric CS measurement matrix design criterion \cite{sensingOPT1, sensingOPT2,hadi2015}:
\begin{align}
\min _{\bPhi_1} \lA \bPhi_1 ^H \bPhi_1 - \bI_{G_r}\rA_F^2. \label{sensing opt problem}
\end{align}
After performing standard algebraic manipulations and exploiting the fact $\bar{\bA}_r \bar{\bA}_r^H  = \frac{G_r}{N_r}\bI_{N_r}$, the optimality condition for \eqref{sensing opt problem} is that  the columns of $\bW_{b,1}$ are orthogonal.
Accounting for the analog-digital array constraint into $\bW_{b,1}$ and setting $B_{r,1} = N_r$, we use the DFT matrix $\bS_{N_r} \in \C^{N_r \times N_r}$ such that
\begin{align}
\bW_{b,1} = \bS_{N_r}, \label{RSB 1}
\end{align}
where $[\bS_{N_r}]_{m,n}=\frac{1}{\sqrt{N_r}}e^{-j\frac{2\pi (m-1)(n-1)}{N_r}}, \forall m,n$.

Based on the RSB in \eqref{RSB 1}, in the following, the distribution of the noise term in \eqref{MMV observation} is discussed.
\begin{Proposition}\label{noise semi}
For any semi-orthogonal matrix $\bA\in \C^{m\times n}$ with $\bA \bA^H = \bI$ and random vector $\bn\in \C^{n \times 1}$ with \gls{iid} entries according to $\cC\cN(0, \sigma^2)$, then if we denote $\bb = \bA\bn$, and the entries in $\bb$ are also \gls{iid} $\cC\cN(0, \sigma^2)$.
\end{Proposition}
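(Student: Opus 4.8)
The plan is to treat $\bb=\bA\bn$ as a linear image of a circularly symmetric complex Gaussian vector and to pin down its law entirely through its first and second moments. First I would note that applying the deterministic matrix $\bA$ to the jointly complex Gaussian vector $\bn$ yields a vector $\bb$ that is again zero-mean complex Gaussian; such a vector is completely characterized by its covariance $\E[\bb\bb^H]$ together with its pseudo-covariance $\E[\bb\bb^T]$. The mean is immediate, since $\E[\bb]=\bA\,\E[\bn]=\bA\bzero=\bzero$.

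Next I would compute the covariance using linearity of expectation and $\E[\bn\bn^H]=\sigma^2\bI_n$, namely
\[
\E[\bb\bb^H]=\bA\,\E[\bn\bn^H]\,\bA^H=\sigma^2\,\bA\bA^H=\sigma^2\bI_m,
\]
where the final equality is precisely the semi-orthogonality hypothesis $\bA\bA^H=\bI$. To secure the circular-symmetry (properness) of $\bb$, I would also verify that the pseudo-covariance vanishes:
\[
\E[\bb\bb^T]=\bA\,\E[\bn\bn^T]\,\bA^T=\bzero,
\]
which follows because the circularly symmetric entries of $\bn$ satisfy $\E[\bn\bn^T]=\bzero$. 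Thus $\bb$ is a zero-mean proper complex Gaussian vector with covariance $\sigma^2\bI_m$.

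The conclusion then follows: because the covariance $\sigma^2\bI_m$ is a scalar multiple of the identity, the components of $\bb$ are pairwise uncorrelated, and for jointly Gaussian variables uncorrelatedness upgrades to full mutual independence. Each diagonal entry of the covariance equals $\sigma^2$, so every component is distributed as $\cC\cN(0,\sigma^2)$, which is exactly the claimed i.i.d. property. I expect the only genuine subtlety to be the justification of the complex Gaussian and properness claims rather than the moment arithmetic itself; the step most easily glossed over is checking that the pseudo-covariance is zero, which is what legitimately promotes ``diagonal covariance'' into ``independent $\cC\cN(0,\sigma^2)$ entries.'' Everything else is a routine computation.
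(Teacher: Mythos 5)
Your proof is correct and follows essentially the same route as the paper's: compute $\E[\bb\bb^H]=\sigma^2\bA\bA^H=\sigma^2\bI$ and invoke the fact that jointly Gaussian, uncorrelated entries are independent. The only difference is that you also verify the vanishing pseudo-covariance $\E[\bb\bb^T]=\bzero$ to justify that each entry is properly $\cC\cN(0,\sigma^2)$ --- a step the paper's one-line proof leaves implicit, so your version is if anything slightly more complete.
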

\begin{proof}
The covariance matrix of $\bb$ is given by $\E[ \bA\bn\bn^H\bA^H] = \sigma^2 \bI$. Because the entries in $\bb$ are obviously complex Gaussian, thus, from the property of Gaussian distribution, the entries in $\bb$ are also \gls{iid} $\cC\cN(0, \sigma^2)$.
\end{proof}

\begin{Remark} \label{remark 2}
Due to the semi-orthogonality of $\bW_{b,1}$ in \eqref{RSB 1}, according to Proposition \ref{noise semi}, the effective noise matrix $\bW_{b,1}^H\bN_1 \in \C^{N_r \times B_{t,1}}$ in \eqref{MMV observation} has i.i.d. Gaussian entries, i.e., $[\bW_{b,1}^H\bN_1]_{i,j} \sim \cC\cN(0,\sigma^2),\forall i,j$. Moreover, since $\bPhi_1 =\bW_{b,1}^H \bar{\bA}_r$, we have $\| [\bPhi_1]_{:,i} \|_2 = 1, \forall i$.
\end{Remark}

The algorithmic procedure estimating AoAs are described in Algorithm \ref{alg_AoAs}.
Given the estimated support set
$\widehat{\Omega}_r$ from Algorithm \ref{alg_SOMP},
the output of Algorithm \ref{alg_AoAs} is the estimated AoA array response matrix $\widehat{\bA}_r = [\bar{\bA}_r]_{:,\widehat{\Omega}_r} \in \C^{N_r \times L}$.
Overall, the number of channel uses for the AoA estimation is
$K_1 = B_{t,1} \frac{ N_r }{N}$.

\begin{algorithm} [t]
\caption{AoA Estimation Algorithm}
\label{alg_AoAs}
\begin{algorithmic} [1]
\STATE Input: Channel dimension $N_r$, $N_t$, number of RF chains $N$, channel paths $L$, power allocation $p_1$, receive array response dictionary $\bar{\bA}_r \in \C^{N_r \times G_r}$.
\STATE Initialization: Generate the TSB $\bF_{b,1} = \sqrt{p_1}[ \be_1,\ldots,\be_{B_{t,1}}]$ in \eqref{F beams AoAs} according to \eqref{design e1} and the RSB $\bW_{b,1} =  \bS_{N_r}$ in \eqref{RSB 1}.
\STATE Collect the observations $\bY_1 = \bW_{b,1}^H \bH \bF_{b,1} + \bW_{b,1}^H\bN_1$.
\STATE
Solve the problem in \eqref{AoA OMP} by using Algorithm \ref{alg_SOMP} with the sparsity level $L$ and $\bPhi_1 = \bW_{b,1}^H \bar{\bA}_r$,
\vspace{-0.1cm}
\begin{align}
(\widehat{\Omega}_r, \widehat{\bC}_1) = \text{SOMP}(\bY_1, \bPhi_1, L). \nonumber
\end{align}
\vspace{-0.5cm}
\STATE Output: Estimation of AoA array response matrix $\widehat{\bA}_r = [\bar{\bA}_r]_{:,\widehat{\Omega}_r}$.
\end{algorithmic}
\end{algorithm}

\subsection{Stage II: AoD Estimation} \label{AoDs method}
To attain the estimation of AoDs, we can utilize the similar method as Stage I. Similar to the one-stage sounding in \eqref{matrix observations}, the observations of Stage II in Fig. \ref{alg diagram} is expressed as
$\bY_2 \in \C^{B_{r,2} \times B_{t,2}}$,
\begin{align}
{{\mathbf{Y}}_{2}}=\mathbf{W}_{b,2}^H \bH {\bF_{b,2}}+\mathbf{W}_{b,2}^H\mathbf{N}_2, \label{observation se}
\end{align}
where  ${\bW_{b,2}}\in {\bC^{N_r\times B_{r,2}}}$ and ${\bF_{b,2}}\in {\bC^{N_t\times B_{t,2}}}$ are the RSB and TSB of the Stage II, respectively.  The $\mathbf{N}_2 \in {\bC^{N_r\times B_{t,2}}}$ is the noise matrix with i.i.d. entries according to $\mathcal{C}\mathcal{N}(0,{{\sigma }^{2}})$. \par
Recall from  \eqref{compact channel} and \eqref{redundant channel estimation}, the channel matrix is rewritten as
\begin{align}
\bH=\bar{\bA}_r{{\bar{\bH}_a}\bar{\bA}_t^H }+\bE. \label{two stage channel}
\end{align}
One can find that $\bar{\bA}_r\bar{\bH}_a\in {\bC^{N_r\times G_t}}$  has $L$ non-zero columns, indexed by $\Omega_t$ with $|\Omega_t|=L$.
Then, plugging \eqref{two stage channel} into  \eqref{observation se} and taking conjugate transpose give
\begin{align}
\mathbf{Y}_{2}^H
  &=\underbrace{\bF_{b,2} ^H \bar{\bA}_{t}}_{\triangleq \bPhi_2}  \underbrace{  \bar{\bH}_a^H  \bar{\bA}_r^H     \bW_{b,2} }_{\triangleq \bC_2}+\bF_{b,2} ^H \bE^H \bW_{b,2} + \bN_2^H\bW_{b,2} \nonumber\\
 &=\bPhi_2{{\mathbf{C}}_{2}}  + \bF_{b,2} ^H \bE^H\bW_{b,2}+\bN_2^H\bW_{b,2}, \label{ob 2nd}
\end{align}
where $\bPhi_2 = \bF_{b,2} ^H \bar{\bA}_{t} \in \C^{B_{t,2}\times G_t}$, and $\bC_2 =  \bar{\bH}_a^H  \bar{\bA}_r^H     \bW_{b,2} \in \C^{G_t \times B_{r,2}}$.
It is straightforward that the ${{\mathbf{C}}_{2}}$ has only $L$ non-zero rows indexed by $\Omega_t$.
Similar to \eqref{AoA OMP} in Stage I, the support set $\Omega_t$ estimation problem can be formulated as
\beq
\widehat{\bC}_2 = \underset{\mathbf{C}_2}{\mathop{\argmin }}\,\left\| \mathbf{Y}_{2}^H-\bPhi_2{{\mathbf{C}}_{2}} \right\|_{F}^{2}\text{ subject to }{{\left\| {{\mathbf{C}}_{2}} \right\|}_{r,0}}\le L,\label{s2 formulation}
\eeq
which is solved by Algorithm \ref{alg_SOMP}. In what follows, the design of RSB ${\bW_{b,2}}$ and TSB ${\bF_{b,2}}$ for Stage II is of interest.

\subsubsection{RSB and TSB Design}
For the design of RSB ${\bW_{b,2}}$,  we leverage the estimated AoAs from Stage I to formulate
\begin{align}
\max_{\bW_{b,2}} \min_{i} {{\| \mathbf{W}_{b,2}^H{{[\widehat{\bA}_r ]}_{:,i}}\|}_{2}}. \label{W2 design}
\end{align}
If $\bW_{b,2}$ is semi-unitary, i.e., $\bW_{b,2}^H \bW_{b,2} = \bI_{B_{r,2}}$, the objective value in \eqref{W2 design} satisfies $\| \mathbf{W}_{b,2}^H{{[ \widehat{\bA}_r ]}_{:,i}} \|_2 \le 1, \forall i$ with the equality holding if
\begin{align}
\cR(\bW_{b,2}) = \cR(\widehat{\bA}_r). \label{W2 design sub}
\end{align}
One can check \eqref{W2 design sub} holds only if $B_{r,2} \ge L$.
Without loss of optimality and to save the number of sounders, we set $B_{r,2} = L$.
One solution to \eqref{W2 design sub} is attained when the columns of $\bW_{b,2}$ are the orthonormal basis of $\widehat{\bA}_r$. For example, we let $\bW_{b,2}$ be the $\bQ$-matrix of the QR decomposition\footnote{The QR decomposition is a decomposition of a matrix $\bA\in \C^{m\times n}$ into the product $\bA = \bQ\bR$ of an orthonormal matrix $\bQ \in \C^{m \times n}$ and an upper triangular matrix $\bR \in \C^{ n\times n}$. } of $\widehat{\bA}_r$ such that
\begin{align}
\bW_{b,2} = \mathop{\mathrm{QR}}(\widehat{\bA}_r), \label{expression Wb2}
\end{align}
where $\mathop{\mathrm{QR}}(\cdot)$ returns the $\bQ$-matrix of a given matrix.

\begin{Remark} \label{remark 3}
Due to the semi-orthogonality of $\bW_{b,2}$ and the conclusions in Proposition \ref{noise semi}, the effective noise matrix $\bW_{b,2}^H\bN_2 \in \C^{B_{r,2} \times B_{t,2}}$ in \eqref{observation se} has \gls{iid} Gaussian entries, i.e., $[\bW_{b,2}^H\bN_2]_{i,j} \sim \cC\cN(0,\sigma^2),\forall i,j$.
\end{Remark}
As for the design of ${\bF_{b,2}}$,  we exploit the isometric CS measurement matrix design criterion,
\begin{align}
\min _{\bPhi_2}\| \bPhi_2 ^H \bPhi_2 - \bI_{G_t}\|_F^2. \label{design of F2}
\end{align}
After similar manipulations as \eqref{sensing opt problem}, the optimality condition for $\bF_{b,2}$ of \eqref{design of F2} is that the columns of $\bF_{b,2}$ are orthogonal.
Then, following the same procedure as \eqref{F beams AoAs} and \eqref{design e1}, we obtain the design of TSP $\bF_{b,2}$ below,
\begin{align}
\bF_{b,2}=\sqrt{p_2}[ {{\mathbf{e}}_1},{{\mathbf{e}}_2},\ldots ,{{\mathbf{e}}_{B_{t,2}} }], \label{expression Fb2}
\end{align}
where $p_2$ is the power coefficient of Stage II.

The algorithmic procedure of estimating AoDs are described in Algorithm \ref{alg_AoDs}.
Provided the estimated support set $\widehat{\Omega}_t$, the output of  Algorithm \ref{alg_AoDs} is the estimated AoD array response matrix  $\widehat{\bA}_t = [\bar{\bA}_t]_{:,\widehat{\Omega}_t} \in \C^{N_t \times L}$. The number of channel uses for the AoD estimation in Stage II is
$K_2 = B_{t,2}$, and the overall number of channel uses for two stages is
\begin{align}
K=K_1 + K_2 =  B_{t,1} \frac{ N_r }{N} + B_{t,2}. \label{number of uses}
\end{align}

\begin{Remark}
Recall that the number of observations for the conventional one-stage channel sounding in Fig. \ref{system diagram} is $\cO(L\cdot \log(G_r G_t/L))$ \cite{DonohoCS}. As a comparison, since the proposed two-stage channel sounding in Fig. \ref{alg diagram} only estimates AoA in Stage I, and estimates AoD in Stage II, the number of required observations
is $\cO(L \cdot   \log (G_r/L))$ in Stage I, and $\cO(L\cdot \log (G_t/L))$ in Stage II. The total number of required observations for the proposed two-stage channel sounding is $\cO(L  \cdot  \log (G_r/L))  +  \cO(L  \cdot  \log (G_t/L)) =  \cO(L  \cdot  \log (G_t G_r/L^2 ) $, which is less than the conventional one-stage sounding.
\end{Remark}

\begin{Remark} \label{RSBHappening}
About happening of the design RSB and TSB,  in Stage I, one can find that the design of RSB in \eqref{RSB 1} and TSB in \eqref{F beams AoAs} are completed before the channel estimation, which are then utilized by the transmitter and receiver.
Like the fact that the training pilots are known for the transmitter and receiver in advance before the task of channel estimation, here we also assume that the TSB and RSB are known as a priori. In Stage II, the TSB ${\bF}_{b,2}$ in \eqref{expression Fb2} is also designed in advance, while the RSB $\bW_{b,2}$ in \eqref{expression Wb2} is designed and employed at the receiver side, which requires no feedback to the transmitter. Overall, the proposed method requires no feedback during the whole procedures of the channel estimation.
\end{Remark}

\begin{algorithm} [t]
\caption{AoD Estimation Algorithm}
\label{alg_AoDs}
\begin{algorithmic} [1]
\STATE Input: Channel dimension $N_r$, $N_t$, number of RF chains $N$, channel paths $L$, power allocation $p_2$, output of AoA estimation $\widehat{\bA}_r$, transmit array response dictionary $\bar{\bA}_t \in \C^{N_t \times G_t}$.
\STATE Initialization: Generate the TSB ${\bF}_{b,2} = \sqrt{p_2}[ \be_{1},\ldots,\be_{B_{t,2}}]$ in \eqref{expression Fb2}
and RSB $\bW_{b,2} = \mathop{\mathrm{QR}}(\widehat{\bA}_r)$ in \eqref{expression Wb2}.
\STATE Collect the observations ${\bY}_2 = \bW_{b,2}^H \bH {\bF}_{b,2} + \bW_{b,2}^H {\bN}_2$.
\STATE Solve the problem in \eqref{s2 formulation} by using Algorithm \ref{alg_SOMP} with the sparsity level $L$ and $\bPhi_2 = \bF_{b,2}^H \bar{\bA}_t$,
\vspace{-0.1cm}
\beq
(\widehat{\Omega}_t, \widehat{\bC}_2) = \text{SOMP}(\bY_2^H,\bPhi_2,L). \nonumber
\eeq
\vspace{-0.5cm}
\STATE Output:  Estimation of AoD array response matrix $\widehat{\bA}_t = [\bar{\bA}_t]_{:,\widehat{\Omega}_t}$.
\end{algorithmic}
\end{algorithm}

\subsection{Channel Estimation} \label{R estimation}

Recalling the channel representation in \eqref{compact channel} and after estimating $\widehat{\bA}_r \in \C^{N_r \times L}$ in Algorithm \ref{alg_AoAs} and $\widehat{\bA}_t \in \C^{N_t \times L}$ in Algorithm \ref{alg_AoDs},  we can express the channel estimate as
\begin{align}
\widehat{\bH} = \widehat{\bA}_r \widehat{\bR} \widehat{\bA}_t^H, \label{expression of estimation}
\end{align}
where $\widehat{\bR} \in \C^{L \times L}$ denotes the estimation of $\diag(\bh)$ in \eqref{compact channel}.
In the following, we will discuss how to obtain the estimate $\widehat{\bR}$.
It is worth noting that unlike \eqref{compact channel} we do not restrict $\widehat{\bR}$ to be a diagonal matrix because of the possible permutations in the columns of $\widehat{\bA}_r$ and $\widehat{\bA}_t$.

Recall the observations of each stage, i.e., $\bY_1= \bW_{b,1}^H \bar{\bA}_r{\bar{\bH}_a}\bar{\bA}_t^H  \bF_{b,1}  +  \bW_{b,1}^H \bE \bF_{b,1} +\bW_{b,1}^H\bN_1$, and $\bY_2= \bW_{b,2}^H \bar{\bA}_r{\bar{\bH}_a}\bar{\bA}_t^H  \bF_{b,2}  +  \bW_{b,2}^H \bE \bF_{b,2} +\bW_{b,2}^H\bN_1$.
Since $\bW_{b,1}^H\bN_1$ and $\bW_{b,2}^H\bN_2$ are \gls{iid} Gaussian, incorporating the expressions of channel estimate in \eqref{expression of estimation}, the estimation of $\widehat{\bR}$ is given by
\begin{align} \nonumber
\widehat{\bR} = \argmin_{\bR}\lA \begin{bmatrix}
  \vect( \bY_1 )\\
  \vect( \bY_2)
\end{bmatrix}  -
\begin{bmatrix}
\vect(  \bW_{b,1}^H \widehat{\bA}_r {\bR} \widehat{\bA}_t^H \bF_{b,1} )\\
 \vect(  \bW_{b,2}^H \widehat{\bA}_r {\bR} \widehat{\bA}_t^H \bF_{b,2})
\end{bmatrix}
\rA_F^2,
\end{align}
where the optimal solution is given by
\begin{align} \nonumber
\vect(\widehat{\bR}) = \left( \bA_1^H \bA_1+\bA_2^H \bA_2\right)^{-1}\left(\bA_1^H  \vect(\bY_1) +\bA_2^H\vect(\bY_2)\right),
\end{align}
where $\bA_1 = (\widehat{\bA}_t^H \bF_{b,1})^T\otimes\bW_{b,1}^H \widehat{\bA}_r \in \C^{N_r B_{t,1}\times L^2}$ and $\bA_2 = (\widehat{\bA}_t^H \bF_{b,2})^T\otimes\bW_{b,2}^H \widehat{\bA}_r\in \C^{L B_{t,2}\times L^2}$. Because $N_r B_{t,1} \gg L^2$ and $B_{t, 2}   \gg   L$, the matrix $\bA_1^H   \bA_1  +  \bA_2^H\bA_2   \in   \C^{L^2    \times L^2}$ is always invertible.
\begin{Remark}
After $\widehat{\bR}$ is estimated, the pairing of AoAs and AoDs can be obtained by selecting positions of the largest $L$ entries in  $\widehat{\bR}$. Then, the path gain $h_l, l=1,2,\cdots,L,$ can be calculated by solving a problem like the oracle estimator in \eqref{oracal estimator}, where the two-stage RSBs and TSBs are utilized.
\end{Remark}
\section{Performance Analysis and Resource Allocation} \label{sec analyze}

In this section, we discuss the reconstruction probability of AoAs and AoDs of the proposed two-stage method in Section \ref{section algorithm}. Moreover, we further enhance the reconstruction performance
by performing power and channel use allocation to each stage.

\subsection{Successful Recovery Probability Analysis}
\subsubsection{SRP of AoA Estimation}
As a starting point, we focus on the SRP of Algorithm \ref{alg_SOMP}.
An SRP bound of SOMP was previously studied in \cite{NoiseSOMP}, where the analysis was based on the restricted isometry property constant of the measurement matrix $\bPhi$.
In this work, we instead analyze the recovery performance of Algorithm \ref{alg_SOMP}, based on the mutual incoherence property (MIP) constant\footnote{The MIP constant of matrix $\bPhi$ is quantified by a variable $\mu = \max_{i\neq j}|\langle [\bPhi]_{:,i},[\bPhi]_{:,j}\rangle|$, where $\langle \cdot, \cdot \rangle$ denotes the inner product.} \cite{DonohoMIP} of $\bPhi$.
 \begin{Lemma}\label{lemma SOMP}
 Suppose $\bC \in \C^{N \times d}$ is a row sparse matrix, where $L$ ($\ll N$) rows of $\bC$, indexed by $\Omega$, are non-zero.
 We consider the observation
 $\bY = \bPhi \bC + \bN$, where $\bY \in \C^{M \times d}$, $\bPhi \in \C^{M \times N}$ is the measurement matrix with $L \leq M \ll N$ and $\| [\bPhi]_{:,i}\|_2=1, \forall i$, and $\bN \in \C^{M \times d}$ is the noise matrix
     with each entry \gls{iid} according to complex Gaussian distribution $\cC\cN(0,\sigma^2)$.
Given that the MIP constant $\mu$ of the measurement matrix $\bPhi$ is $\mu< 1/ (2L-1)$, the SRP of Algorithm \ref{alg_SOMP} satisfies
 \begin{align}
   \text{Pr}(\cV_{S})\ge F_2\left(\frac{{(1-(2L-1)\mu)^2 C_{\text{min}}^2}-4\sigma^2 \mu_{M,d}}{4\sigma^2 \sigma_{M,d}}\right), \label{SOMP prob}
  \end{align}
  where $\cV_{S}$  is the event of successful reconstruction of Algorithm \ref{alg_SOMP}, $C_{\min} = \min\limits _{i\in \Omega}  \lA [\bC]_{i,:} \rA_2$, $\mu_{M,d}  =(M^{1/2}  + d^{1/2}) ^2$, $
\sigma_{M,d} = (M^{1/2}  + d^{1/2}) (M^{-1/2}  + d^{-1/2}) ^{1/3} $, and the function $F_2(\cdot)$\footnote{
  The CDF of Tracy-Widom law \cite{TW1,TW2} $F_2(\cdot)$ is expressed as
  \begin{align}
  F_2(s) = \exp\left( \int_s^{\infty}(x-s)q(x) dx\right), \nonumber
  \end{align}
  where $q(x)$ is the solution of Painlev\'{e} equation of type II:
  \begin{align}
  q''(x)=xq(x)+2q(x)^3,~ q(x) \sim \text{Ai}(x), x \rightarrow \infty, \nonumber
  \end{align}
  where $\text{Ai}(x)$ is the Airy function \cite{TW2, TW1}. To save computational complexity, we admit the table lookup method \cite{dataTW} to obtain the value of $F_2(\cdot)$.
  } is the cumulative distribution function (CDF) of Tracy-Widom law \cite{TW2, TW1}.
 \end{Lemma}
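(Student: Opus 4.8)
The plan is to reduce the probabilistic claim to a purely deterministic ``signal-beats-noise'' condition on the spectral norm of the noise, and then to invoke the Tracy--Widom law for the largest eigenvalue of a complex Wishart matrix. The key observation is that Algorithm~\ref{alg_SOMP} recovers $\Omega$ exactly precisely when, at each of the $L$ iterations, the selected row index $\eta$ lies in $\Omega$; so it suffices to guarantee
$$\max_{i\in\Omega}\lA [\bPhi]_{:,i}^H\bR\rA_2 > \max_{j\notin\Omega}\lA [\bPhi]_{:,j}^H\bR\rA_2$$
for the current residual $\bR=(\bI-\bP_S)\bY$, where $\bP_S$ is the orthogonal projection onto the span of the already-selected (correct) columns $S\subseteq\Omega$, consistent with the residual update in Algorithm~\ref{alg_SOMP}.

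First I would split the residual as $\bR=\bR_{\mathrm{sig}}+\bR_{\mathrm{noise}}$ with $\bR_{\mathrm{sig}}=(\bI-\bP_S)\bPhi_\Omega\bC_\Omega$ and $\bR_{\mathrm{noise}}=(\bI-\bP_S)\bN$. Since $\bR_{\mathrm{sig}}\in\cR(\bPhi_\Omega)$, write $\bR_{\mathrm{sig}}=\bPhi_\Omega\bZ$ and expand $[\bPhi]_{:,k}^H\bPhi_\Omega\bZ$ into a diagonal term equal to $[\bZ]_{k,:}$ (the columns being unit-norm by Remark~\ref{remark 2}) plus off-diagonal terms each bounded in row-norm by the MIP constant $\mu$. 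With $Z_{\max}=\max_{k\in\Omega}\lA[\bZ]_{k,:}\rA_2$ this yields
$$\max_{i\in\Omega}\lA[\bPhi]_{:,i}^H\bR_{\mathrm{sig}}\rA_2\ge (1-(L-1)\mu)Z_{\max}, \qquad \max_{j\notin\Omega}\lA[\bPhi]_{:,j}^H\bR_{\mathrm{sig}}\rA_2\le \mu L\,Z_{\max},$$
so the noiseless margin is at least $(1-(2L-1)\mu)Z_{\max}$, which is strictly positive exactly under the hypothesis $\mu<1/(2L-1)$. Because $\bI-\bP_S$ is a projection and the columns are unit-norm, each noise perturbation obeys $\lA[\bPhi]_{:,k}^H\bR_{\mathrm{noise}}\rA_2\le\lA\bN\rA_2$, so the on- and off-support maxima each shift by at most $\lA\bN\rA_2$. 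Hence the current selection is correct whenever $2\lA\bN\rA_2 < (1-(2L-1)\mu)Z_{\max}$; lower-bounding $Z_{\max}\ge C_{\min}$ uniformly over iterations then gives the single sufficient event $\lA\bN\rA_2^2<\tfrac14(1-(2L-1)\mu)^2C_{\min}^2$.

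Finally I would translate this event into the stated probability. The effective noise has i.i.d.\ $\cC\cN(0,\sigma^2)$ entries (Remark~\ref{remark 2}), so $\bN=\sigma\bG$ with $\bG\in\C^{M\times d}$ standard complex Gaussian and $\lA\bN\rA_2^2=\sigma^2\lambda_{\max}(\bG\bG^H)$. Since $\cV_S$ contains the deterministic event above,
$$\text{Pr}(\cV_S)\ge \text{Pr}\!\left(\lambda_{\max}(\bG\bG^H)<\frac{(1-(2L-1)\mu)^2C_{\min}^2}{4\sigma^2}\right),$$
and invoking the complex ($\beta=2$) Tracy--Widom law, $\bigl(\lambda_{\max}(\bG\bG^H)-\mu_{M,d}\bigr)/\sigma_{M,d}$ has limiting CDF $F_2$; substituting the centering $\mu_{M,d}=(M^{1/2}+d^{1/2})^2$ and scaling $\sigma_{M,d}$ reproduces exactly~\eqref{SOMP prob}.

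The main obstacle is the uniform lower bound $Z_{\max}\ge C_{\min}$ at every iteration: at the first step $\bZ=\bC_\Omega$, so $Z_{\max}=\max_k\lA[\bC]_{k,:}\rA_2\ge C_{\min}$ trivially, but after projecting out the selected columns the residual coefficients $\bZ$ are distorted, and one must argue---e.g.\ via the MIP-controlled near-orthonormality of $\bPhi_\Omega$, which keeps the largest surviving residual coefficient above $C_{\min}$---that the margin does not collapse before all $L$ atoms are found. The second, softer, point is that the Tracy--Widom statement is an asymptotic (large $M,d$) law, so~\eqref{SOMP prob} is an approximation that sharpens as the dimensions grow; a fully non-asymptotic constant would require replacing $F_2$ by an explicit Wishart tail bound.
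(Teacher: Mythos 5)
Your argument arrives at exactly the bound in \eqref{SOMP prob}, and the second half of your proof (writing $\bN=\sigma\bG$ and invoking the complex Tracy--Widom law for $\lambda_{\max}(\bG\bG^H)$ with centering $\mu_{M,d}$ and scaling $\sigma_{M,d}$) is precisely what the paper does in Appendix A via \eqref{guanssian dis}. Where you genuinely diverge is the first half: the paper does not re-derive the deterministic success condition at all --- it cites a black-box result from \cite{zhang2021successful} stating that for arbitrary random noise, $\text{Pr}(\cV_S)\ge F_N\bigl(C_{\min}(1-(2L-1)\mu)/2\bigr)$ where $F_N$ is the CDF of $\lA\bN\rA_2$, and then substitutes the Tracy--Widom approximation for $F_N$. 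You instead reconstruct that deterministic threshold from first principles via the standard MIP greedy-selection argument (signal margin $(1-(2L-1)\mu)Z_{\max}$ minus a $2\lA\bN\rA_2$ noise perturbation), landing on the identical condition $\lA\bN\rA_2<\tfrac12(1-(2L-1)\mu)C_{\min}$. Your route is more self-contained and makes transparent where each factor in \eqref{SOMP prob} comes from; the paper's route is shorter and defers the combinatorial work to the reference. (Incidentally, the cited theorem in the appendix states the hypothesis as $\mu<1/(2L+1)$ while the lemma uses $\mu<1/(2L-1)$; your derivation confirms that $1/(2L-1)$ is the threshold actually needed.)

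The one gap you flag --- maintaining $Z_{\max}\ge C_{\min}$ after the first iteration --- is real but closes cleanly, and by a slightly different mechanism than the ``near-orthonormality'' you gesture at. Writing $\bP_S\bPhi_{\Omega\setminus S}\bC_{\Omega\setminus S}=\bPhi_S\bB$ for some $\bB$ (legitimate since $\bP_S$ projects onto $\cR(\bPhi_S)$, which has full column rank under the MIP hypothesis), the residual signal becomes $\bPhi_\Omega\bZ$ with $[\bZ]_{k,:}=[\bC]_{k,:}$ \emph{exactly} for every not-yet-selected $k\in\Omega\setminus S$; only the rows of $\bZ$ indexed by $S$ are distorted. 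Hence $Z_{\max}\ge\max_{k\in\Omega\setminus S}\lA[\bC]_{k,:}\rA_2\ge C_{\min}$ at every iteration, and already-selected columns are annihilated by $\bI-\bP_S$ so they cannot be re-chosen. With that insertion your proof is complete, up to the same asymptotic caveat on $F_2$ that the paper itself accepts by writing \eqref{guanssian dis} as an approximation.
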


\begin{proof}
 See  Appendix \ref{appendix5-1}.
\end{proof}

\begin{Proposition} \label{with noise p}
Suppose the signal model provided in Lemma \ref{lemma SOMP} and, given the quantization error, the observation model $\bY = \bPhi \bC + \tilde{\bN}$,
where effective noise $\tilde{\bN} = \bE+\bN$ with quantization error $\bE$ and Gaussian noise $\bN$ of \gls{iid} $\cC\cN(0,\sigma^2)$ entries.
If $\mu$ is the MIP constant of the measurement matrix $\bPhi$  with $\mu < 1/(2L-1)$, the SRP of Algorithm \ref{alg_SOMP} is given by
\beq
\text{Pr}(\cV_S) \ge F_2\left(\frac{{\left((1-(2L-1)\mu) C_{\text{min}}-2\| \bE\|_2\right)^2}-4\sigma^2 \mu_{M,d}}{4\sigma^2 \sigma_{M,d}}\right),  \label{prob noise case}
\eeq
where $C_{\min} = \min\limits _{i\in \Omega}  \lA [\bC]_{i,:} \rA_2$,
$\mu_{M,d}  =(M^{1/2}  + d^{1/2}) ^2$, and $\sigma_{M,d} = (M^{1/2}  + d^{1/2}) (M^{-1/2}  + d^{-1/2}) ^{1/3} $.

\begin{proof}
  See Appendix \ref{proof prob noise case}.
\end{proof}
\end{Proposition}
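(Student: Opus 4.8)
The plan is to reuse the deterministic recovery analysis that underlies Lemma~\ref{lemma SOMP} and simply re-examine how the enlarged noise term propagates through it. Recall that the proof of Lemma~\ref{lemma SOMP} isolates a \emph{deterministic} sufficient condition for the success event $\cV_S$ of Algorithm~\ref{alg_SOMP}: at every iteration SOMP selects a correct atom provided the effective signal gap dominates the projected noise, i.e.\ a condition of the form
\begin{align}
(1-(2L-1)\mu)C_{\min} > 2\lA \bN \rA_2, \nonumber
\end{align}
where $\lA\bN\rA_2$ is the spectral norm of the noise matrix, controlled using the unit-norm columns of $\bPhi$ (Remark~\ref{remark 2}). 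The randomness then enters only through $\bN$: for $\bN$ with i.i.d.\ $\cC\cN(0,\sigma^2)$ entries we may write $\lA\bN\rA_2^2 = \sigma^2\sigma_{\max}^2(\bG)$ with $\bG$ a standard complex Gaussian matrix, and $(\sigma_{\max}^2(\bG)-\mu_{M,d})/\sigma_{M,d} \convd \TW$. Squaring the deterministic condition and inverting $F_2$ reproduces the bound~\eqref{SOMP prob}.

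First I would substitute the effective noise $\tilde{\bN} = \bE + \bN$ for $\bN$ in this very same condition, yielding the requirement $(1-(2L-1)\mu)C_{\min} > 2\lA \bE + \bN\rA_2$. By the triangle inequality for the spectral norm, $\lA \bE + \bN\rA_2 \le \lA \bE\rA_2 + \lA \bN\rA_2$, so it suffices that
\begin{align}
(1-(2L-1)\mu)C_{\min} - 2\lA \bE\rA_2 > 2\lA \bN\rA_2. \nonumber
\end{align}
This is exactly the Lemma's condition with the effective signal strength reduced from $(1-(2L-1)\mu)C_{\min}$ to $(1-(2L-1)\mu)C_{\min} - 2\lA\bE\rA_2$. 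Assuming the left-hand side is positive, I would square both sides, substitute $\lA\bN\rA_2^2=\sigma^2\sigma_{\max}^2(\bG)$, and apply the Tracy--Widom limit to the right-hand side exactly as in the Lemma. Inverting $F_2$ then reproduces~\eqref{prob noise case}, with the quantization error appearing only as the fixed offset $2\lA\bE\rA_2$ inside the argument.

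The step I expect to require the most care is the clean separation of the deterministic quantization error $\bE$ from the random Gaussian part $\bN$. The triangle inequality achieves this at the level of spectral norms, but I must confirm that the quantity to which the recovery condition is genuinely reduced in the Lemma's proof is an operator norm of the noise, so that $\lA\bE\rA_2$ enters as a fixed threshold shift rather than coupling into the random-matrix statistics that generate the Tracy--Widom law; this again leans on the column-normalization $\lA[\bPhi]_{:,i}\rA_2=1$ to bound the projected quantization error uniformly by $\lA\bE\rA_2$. Finally, the degenerate regime $(1-(2L-1)\mu)C_{\min} - 2\lA\bE\rA_2 \le 0$ invalidates the squaring step, but there the argument of $F_2$ is dominated by the negative term $-4\sigma^2\mu_{M,d}$ and drives $F_2$ toward $0$, so the stated lower bound holds vacuously and needs no separate treatment.
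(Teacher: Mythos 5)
Your proposal is correct and follows essentially the same route as the paper's proof: both reduce the problem to the event $\lA\bN\rA_2 \le \tfrac{1}{2}(1-(2L-1)\mu)C_{\min} - \lA\bE\rA_2$ via the triangle inequality $\lA\bE+\bN\rA_2 \le \lA\bE\rA_2 + \lA\bN\rA_2$, and then apply the Tracy--Widom approximation for $\lA\bN\rA_2$ exactly as in Lemma~\ref{lemma SOMP}. The only cosmetic difference is that the paper applies the triangle inequality at the level of the CDF of $\lA\tilde{\bN}\rA_2$ and invokes the arbitrary-noise SRP bound as a black box, whereas you apply it to the underlying deterministic success condition; the two are equivalent.
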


As a direct consequence of Proposition \ref{with noise p}, Theorem \ref{AoAs estimation} below quantifies the SRP of AoA estimation in Algorithm \ref{alg_AoAs}.

 \begin{Theorem} \label{AoAs estimation}
Assume the MIP constant of the measurement matrix $\bPhi_1$ in Algorithm \ref{alg_AoAs} satisfies $\mu_1 < 1/(2L-1)$. Then,
the SRP of Algorithm \ref{alg_AoAs} is lower bounded by
\begin{align}
\text{Pr}(\cA_{S})
&\ge F_2 \left(\frac{   ( 1-(2L-1)\mu_1) \left({h_{\text{min}} \sqrt{\frac{p_1 B_{t,1}}{N_t} }}-2\|\bE_1 \|_2\right)^2-4\sigma^2   \mu_{{N_r,B_{t,1}}}}{4\sigma^2 \sigma_{{N_r,B_{t,1}}}} \right) \nonumber \\
&\approx
  F_2  \left(\frac{   (1-(2L-1)\mu_1) {h_{\text{min}}^2 \frac{p_1 B_{t,1}}{N_t} }-4\sigma^2  \mu_{{N_r,B_{t,1}}}}{4\sigma^2 \sigma_{{N_r,B_{t,1}}}}\right) \label{aoa prob temp} \\
&\triangleq  P_{\text{I}}(p_1, B_{t,1}), \label{aoa prob}
\end{align}
where $\cA_{S}$ is the event of successful reconstruction of AoA,
$h_{\min} = \min_{l\le L} |h_l|$ with $h_l$ being the $l$th entry of $\bh$ in \eqref{compact channel},
$\mu_{{N_r,B_{t,1}}}  =(N_r^{1/2}  + B_{t,1}^{1/2}) ^2$, $
\sigma_{{N_r,B_{t,1}}} = (N_r^{1/2}  + B_{t,1}^{1/2}) (N_r^{-1/2}  + B_{t,1}^{-1/2}) ^{1/3} $, and
$\bE_1 = \mathbf{W}_{b,1}^H\bE{\bF_{b,1}}$.
The approximation in \eqref{aoa prob temp} is obtained by neglecting the quantization term $\bE_1$.
 In \eqref{aoa prob}, the SRP lower bound is substituted as a function of $(p_1, B_{t,1})$.
 \end{Theorem}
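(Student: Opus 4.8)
The plan is to recognize the Stage~I recovery task as a direct instance of the general MMV model treated in Proposition~\ref{with noise p}, and then to carry out the single genuinely new computation: evaluating the minimum row energy $C_{\min}$ of the signal matrix for the specific sounder designs of Algorithm~\ref{alg_AoAs}.

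First I would align the observation model \eqref{MMV observation}, written as $\bY_1 = \bPhi_1 \bC_1 + \bE_1 + \bW_{b,1}^H \bN_1$ with $\bE_1 = \bW_{b,1}^H \bE \bF_{b,1}$, with the hypotheses of Proposition~\ref{with noise p}. The correspondence is: measurement matrix $\bPhi_1$ with $M = B_{r,1} = N_r$ rows; row-sparse signal $\bC_1$ with $L$ nonzero rows indexed by $\Omega_r$ and $d = B_{t,1}$ columns; quantization error $\bE_1$; and Gaussian noise $\bW_{b,1}^H \bN_1$. The three hypotheses of the Proposition are then verified directly from the Stage~I design: the columns of $\bPhi_1$ have unit norm and the effective noise $\bW_{b,1}^H \bN_1$ has i.i.d.\ $\cC\cN(0,\sigma^2)$ entries, which is precisely the content of Remark~\ref{remark 2} (itself a consequence of Proposition~\ref{noise semi} and the semi-orthogonality of $\bW_{b,1} = \bS_{N_r}$ in \eqref{RSB 1}); and $\mu_1 < 1/(2L-1)$ is the standing assumption. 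Under these identifications the quantities $\mu_{M,d}$ and $\sigma_{M,d}$ specialize to $\mu_{N_r,B_{t,1}}$ and $\sigma_{N_r,B_{t,1}}$.

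The crux of the argument is to evaluate $C_{\min} = \min_{i\in\Omega_r}\lA [\bC_1]_{i,:}\rA_2$. Since $\bC_1 = \bQ_r \bF_{b,1}$ with $\bQ_r = \bar{\bH}_a \bar{\bA}_t^H$, the nonzero row of $\bQ_r$ associated with the $l$th path equals $h_l [\bar{\bA}_t]_{:,j_l}^H$, where $j_l$ is the on-grid AoD index of that path. Substituting the TSB design $\bF_{b,1} = \sqrt{p_1}[\be_1,\ldots,\be_{B_{t,1}}]$ from \eqref{F beams AoAs}, the $m$th entry of this row is $h_l\sqrt{p_1}\,[\bar{\bA}_t]_{m,j_l}^{*}$, and since each entry of a transmit array response vector has modulus $1/\sqrt{N_t}$, every such entry has magnitude $|h_l|\sqrt{p_1/N_t}$. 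Summing the $B_{t,1}$ squared magnitudes yields $\lA [\bC_1]_{i,:}\rA_2 = |h_l|\sqrt{p_1 B_{t,1}/N_t}$, whence $C_{\min} = h_{\min}\sqrt{p_1 B_{t,1}/N_t}$ with $h_{\min} = \min_{l\le L}|h_l|$.

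Finally I would substitute $C_{\min} = h_{\min}\sqrt{p_1 B_{t,1}/N_t}$, $\bE = \bE_1$, $M = N_r$, and $d = B_{t,1}$ into the bound of Proposition~\ref{with noise p} to obtain the first (exact) line of the theorem; the approximation \eqref{aoa prob temp} then follows by discarding the typically small quantization term $2\lA\bE_1\rA_2$ in the squared numerator, and relabeling the result as $P_{\text{I}}(p_1,B_{t,1})$ displays it as a function of the Stage~I resources. I expect the $C_{\min}$ computation to be the main obstacle, chiefly in recognizing that the nonzero rows of $\bQ_r$ are scaled transmit steering vectors and in exploiting the constant-modulus property of $\bar{\bA}_t$; the remaining steps are bookkeeping that specialize the general Proposition to the Stage~I parameters.
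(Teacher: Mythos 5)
Your proposal is correct and follows essentially the same route as the paper: the paper's proof likewise reduces the theorem to the general SRP bound by computing the row norms $\lA [\bC_1]_{r_l,:}\rA_2 = \sqrt{p_1 B_{t,1}/N_t}\,|h_l|$ under the Stage~I sounder designs and substituting $C_{\min} = h_{\min}\sqrt{p_1 B_{t,1}/N_t}$. Your write-up is merely more explicit about verifying the hypotheses (unit-norm columns of $\bPhi_1$, i.i.d.\ effective noise via Remark~\ref{remark 2}) and about deriving the row-norm identity from the constant-modulus structure of $\bar{\bA}_t$, which the paper states without detail.
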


\begin{proof}
Recalling the observation model in \eqref{MMV observation} with the TSB and RSB in \eqref{F beams AoAs} and \eqref{RSB 1}, respectively, the effective TSB matrix $\bC_1$ in \eqref{MMV observation} satisfies ${{\| {{[{{\mathbf{C}}_{1}}]}_{{{r}_{l}},:}} \|}_{2}}=\sqrt{\frac{p_1B_{t,1}}{N_t}}\left| {{h}_{l}} \right|$, where ${{r}_{l}} \in \Omega_r$ is the index of the $l$th path of $\bA_r$ in $\bar{\bA}_r$ such that $[\bar{\bA}_r]_{:,r_l}=[\bA_r]_{:,l}$, $l=1, \ldots, L$.
Substituting $C_{\min} = \min \limits _{r_l \in \Omega_r}\lA  [\bC_1]_{r_l,:} \rA_2 =\sqrt{\frac{p_1B_{t,1}}{N_t}}\left| {{h}_{\min}} \right|$ in \eqref{SOMP prob} results in \eqref{aoa prob}, which completes the proof.
\end{proof}
\begin{Remark} \label{aoa fixed p}
According to Theorem \ref{AoAs estimation}, when the power $p_1$ of Stage I is fixed and the number of transmit sounding beams $B_{t,1}$($\ll N_r$) increases, the SRP of AoA increases accordingly. Interestingly, it is more efficient to increase the power allocation $p_1$ than the number of transmit sounding beams $B_{t,1}$ to achieve a higher SRP of AoA.
This can be understood through the two cases where $p_1$ or $B_{t,1}$ grow at the same rate.
Compared to the case of $p_1$, both $\mu_{N_r, B_{t,1}}$ and $\sigma_{N_r, B_{t,1}}$ increase
slowly as $B_{t,1}$ grows, resulting in lower SRP in  \eqref{aoa prob temp}. This aspect will be clearer in the next subsection when we optimize the allocation of $p_1$ and $B_{t,1}$.

\end{Remark}
\subsubsection{SRP of AoD Estimation}
Regarding the SRP of Algorithm \ref{alg_AoDs}, we assume for tractability that the AoA estimation in Stage I was perfect.
The following theorem quantifies the SRP of AoD estimation in Algorithm \ref{alg_AoDs}.

\begin{Theorem} \label{AoDs Prob}
Provided the perfect AoA knowledge known a priori and MIP constant $\mu_2$ of matrix $\sqrt{{N_t}/{(p_2 B_{t,2})}}\bPhi_2$ satisfying $\mu_2 < 1/(2L-1)$, the SRP of Algorithm \ref{alg_AoDs} is lower bounded by
\begin{align}
\text{Pr}(\cD_S)
 &\ge\ F_2 \left(\frac{    (1-(2L-1)\mu_2) h_{\text{min}}  - 2\| \bE_2\|_2)^2 -4\sigma^2 \frac{N_t}{p_2 B_{t,2}}  N_t \mu_{{B_{t,2},L}}}{4N_t\sigma^2 \frac{N_t}{p_2 B_{t,2}} \sigma_{{B_{t,2},L}}}\right) \nonumber\\
  &\approx \ F_2 \left(\frac{    (1-(2L-1)\mu_2)^2 h_{\text{min}}^2   -4\sigma^2 \frac{N_t}{p_2 B_{t,2}}  N_t \mu_{{B_{t,2},L}}}{4N_t\sigma^2 \frac{N_t}{p_2 B_{t,2}}  \sigma_{{B_{t,2},L}}}\right)\label{AoDs pro}\\
 &\triangleq   P_{\text{II}}(p_2, B_{t,2}), \label{AoDs pro temp}
\end{align}
where $\cD_S$ denotes the event of successful AoD reconstruction,
$h_{\min} = \min _{ l\le L} |h_l|$ with $h_l$ being the $l$th entry of $\bh$ in \eqref{compact channel}, $\mu_{{B_{t,2},L}}  =(L^{1/2} + B_{t,2}^{1/2}) ^2$,  $\sigma_{{B_{t,2},L}} = (L^{1/2}  + B_{t,2}^{1/2}) (L^{-1/2}  + B_{t,2}^{-1/2}) ^{1/3}$, and $\bE_2 = \frac{N_t}{p_2 B_{t,2}}\bF_{b,2} ^H \bE\bW_{b,2}$. In \eqref{AoDs pro temp} , the SRP lower bound is substituted as a function of $(p_2, B_{t,2})$.
 \end{Theorem}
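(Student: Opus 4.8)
The plan is to apply Proposition \ref{with noise p} directly to the Stage~II observation model in \eqref{ob 2nd}, which is already in the row-sparse MMV form $\bY_2^H = \bPhi_2\bC_2 + \bF_{b,2}^H\bE^H\bW_{b,2} + \bN_2^H\bW_{b,2}$ required by that proposition: the measurement matrix is $\bPhi_2$, the signal $\bC_2$ is row-sparse with support $\Omega_t$ ($|\Omega_t|=L$), the quantization error is $\bF_{b,2}^H\bE^H\bW_{b,2}$, and the Gaussian perturbation is $\bN_2^H\bW_{b,2}$. Three ingredients feed the bound, and I would establish them in turn: the normalization of the columns of $\bPhi_2$, the value of $C_{\min}$, and the distribution of the Gaussian term.

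First I would normalize the measurement matrix. Using $\bF_{b,2}=\sqrt{p_2}[\be_1,\ldots,\be_{B_{t,2}}]$ from \eqref{expression Fb2} together with the constant-modulus entries $1/\sqrt{N_t}$ of $\bar{\bA}_t$, a short calculation gives $\lA[\bPhi_2]_{:,i}\rA_2=\sqrt{p_2 B_{t,2}/N_t}$ for every $i$. Because all columns share this common norm, the selection step of Algorithm \ref{alg_SOMP} is invariant under multiplying $\bPhi_2$ by a scalar, so it is without loss of generality to run the analysis on the normalized matrix $\sqrt{N_t/(p_2 B_{t,2})}\,\bPhi_2$, whose unit-norm columns have MIP constant exactly $\mu_2$ as assumed. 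Carrying this normalization through the observation equation is precisely what transfers the power factor $p_2 B_{t,2}/N_t$ into the effective noise variance in the denominator of the theorem's bound and into the scaling of the quantization term $\bE_2$.

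Second, I would compute $C_{\min}$ under the assumed perfect Stage~I recovery, so that $\widehat{\bA}_r=\bA_r=[\bar{\bA}_r]_{:,\Omega_r}$ and, by \eqref{expression Wb2}, the columns of $\bW_{b,2}$ form an orthonormal basis of $\cR(\widehat{\bA}_r)=\cR(\bA_r)$. Writing $\bC_2=\bar{\bH}_a^H\bar{\bA}_r^H\bW_{b,2}$ and using that the $t_l$-th row of $\bar{\bH}_a^H$ carries the single nonzero entry $h_l^*$ in column $r_l$, the $t_l$-th row of $\bC_2$ equals $h_l^*[\bA_r]_{:,l}^H\bW_{b,2}$. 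Since $[\bA_r]_{:,l}\in\cR(\bW_{b,2})$ and $\bW_{b,2}$ is semi-unitary, $\lA\bW_{b,2}^H[\bA_r]_{:,l}\rA_2=\lA[\bA_r]_{:,l}\rA_2=1$, whence $\lA[\bC_2]_{t_l,:}\rA_2=|h_l|$ and $C_{\min}=\min_{l\le L}|h_l|=h_{\min}$. Third, by Remark \ref{remark 3} the matrix $\bW_{b,2}^H\bN_2$ has i.i.d. $\cC\cN(0,\sigma^2)$ entries, hence so does its conjugate transpose $\bN_2^H\bW_{b,2}$, verifying the Gaussian hypothesis of the proposition.

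Finally, I would identify the dimensions of the normalized problem as $M=B_{t,2}$ and $d=B_{r,2}=L$ (the latter from the choice $B_{r,2}=L$ made in Section \ref{AoDs method}), so that $\mu_{M,d}=\mu_{B_{t,2},L}$ and $\sigma_{M,d}=\sigma_{B_{t,2},L}$, and then substitute $C_{\min}=h_{\min}$, the constant $\mu_2$, the rescaled noise variance, and the normalized quantization term $\bE_2$ into the Proposition \ref{with noise p} bound to obtain the first (exact) inequality of the theorem. The approximation in \eqref{AoDs pro} then follows by discarding the quantization contribution, which leaves $(1-(2L-1)\mu_2)^2 h_{\min}^2$ in place of the squared binomial $((1-(2L-1)\mu_2)h_{\min}-2\lA\bE_2\rA_2)^2$. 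I expect the normalization bookkeeping to be the main obstacle: unlike $\bPhi_1$ in Stage~I, which already has unit-norm columns by Remark \ref{remark 2}, here $\bPhi_2$ does not, so one must track carefully how $p_2 B_{t,2}/N_t$ propagates simultaneously into the denominator noise variance and into $\bE_2$ --- this is the sole structural source of the extra $N_t$ and $p_2 B_{t,2}$ factors that distinguish this bound from the AoA bound in Theorem \ref{AoAs estimation}.
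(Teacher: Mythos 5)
Your proposal is correct and follows essentially the same route as the paper's own (much terser) proof in Appendix C: compute $\lA [\bPhi_2]_{:,j}\rA_2=\sqrt{p_2B_{t,2}/N_t}$, establish $C_{\min}=|h_{\min}|$ under the perfect Stage-I assumption, and plug these into the SOMP recovery bound with $M=B_{t,2}$, $d=L$. Your only deviations are cosmetic and in your favor: you invoke Proposition \ref{with noise p} (the version that actually produces the $\lA\bE_2\rA_2$ term) where the paper cites Lemma \ref{lemma SOMP}, and you make explicit the column-normalization bookkeeping that carries the factor $p_2B_{t,2}/N_t$ into the effective noise variance, which the paper leaves implicit.
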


\begin{proof}
See Appendix \ref{appendix5-3}.
\end{proof}

\subsection{Power and Channel Use Allocation} \label{section power allocation}
We recall that in the proposed two-stage method, the transmit sounding beams at Stage I and II are, respectively,
$\bF_{b,1} = \sqrt{p_1}[\be_1,\ldots,\be_{B_{t,1}}]$ in \eqref{F beams AoAs} and $\bF_{b,2} = \sqrt{p_2}[\be_{1},\ldots,\be_{B_{t,2}}]$ in \eqref{expression Fb2}.
The total power budget $E$ is therefore defined by
\begin{align}
E = \underbrace{ p_1 B_{t,1}{N_r}/{N}}_{\triangleq E_1} +  \underbrace{ p_2 B_{t,2} }_{\triangleq  E_2}, \label{power value}
\end{align}
where $E_1$ and $E_2$ are the power budgets at the Stage I and Stage II, respectively.

We let  $\eta_1>0$ and $\eta_2>0$ be the target SRP values at Stage I and Stage II, respectively. The SRP-guaranteed power budget minimization problem\footnote{
In \eqref{power allocation problem}, we present the SRP-constrained power minimization problem for optimizing power and channel use allocations. For instance, this criterion can be thought of as a prudent alternative of  the performance maximization subject to power constraints in the MIMO literature because it provides a guarantee on the achievable performance \cite{bengtsson2002pragmatic}.
Multiple variants of the performance-guaranteed power minimization problem can be found in the context of MIMO resource allocation \cite{wiesel2005linear,dahrouj2010coordinated}.
} is  then formulated as
\begin{subequations} \label{power allocation problem}
\begin{align}
&\min_{p_1,p_2,B_{t,1},B_{t,2}} E_1 + E_2    \\
&\text{subject to}~~P_{\text{I}}(p_1,B_{t,1}) \ge \eta_1, ~ P_{\text{II}}(p_2,B_{t,2}) \ge \eta_2,   \\
& ~~~~~~~~~~~E_1 = p_1B_{t,1}{N_r}/{N},~ E_2 =  p_2 B_{t,2},
\\
& ~~~~~~~~~~~B_{t,1}\ge \widetilde{B}_{t,1}, B_{t,2}\ge \widetilde{B}_{t,2},
\end{align}
\end{subequations}
where $\widetilde{B}_{t,1}$ and $\widetilde{B}_{t,2}$ are the minimum numbers of allowed transmit beams at Stage I and Stage II, respectively.
The problem in \eqref{power allocation problem} optimizes the power allocation $p_1$ and $p_2$, and the number of transmit beams $B_{t,1}$ and $B_{t,2}$ to minimizes the total power budget subject to the SRP requirements at Stage I and Stage II. It is worth noting that
that because the problem in \eqref{power allocation problem} is separable, thus \eqref{power allocation problem} is equivalent to the following two  sub-problems,
\begin{subequations} \label{power allocation problem e1}
\begin{align}
&\min_{p_1,B_{t,1}}E_1\\
&\text{subject to}~P_{\text{I}}(p_1,B_{t,1}) \ge \eta_1,E_1 = p_1B_{t,1}\frac{N_r}{N},B_{t,1}\ge \widetilde{B}_{t,1}, \label{power allocation problem e1 b}
\end{align}
\end{subequations}
and
\begin{subequations}\label{power allocation problem e2}
 \begin{align}
&\min_{p_2,B_{t,2}}  E_2\\
& \text{subject to}~ P_{\text{II}}(p_2,B_{t,2}) \ge \eta_2, E_2 =  p_2 B_{t,2},B_{t,2}\ge \widetilde{B}_{t,2}.
\end{align}
\end{subequations}

First of all, we focus on the sub-problem of Stage I in \eqref{power allocation problem e1}.
It is worth noting that directly solving \eqref{power allocation problem e1} is difficult due to the coupled constraints. Thus, we first maximize the SRP, i.e., $P_{\text{I}}(p_1, B_{t,1})$, with arbitrary power budget $E_1$,
\begin{subequations} \label{opt11}
\begin{align}
  &  \max_{p_1,B_{t,1}} P_{\text{I}}(p_1, B_{t,1})
\\
& \text{subject to} ~~ p_1 B_{t,1} N_r/ N = E_1, ~~  B_{t,1} \geq \widetilde{B}_{t,1}.
\end{align}
\end{subequations}
Prior to showing how to solve the problem in \eqref{opt11}, we first elaborate the relation between the problem in \eqref{power allocation problem e1} and \eqref{opt11}. It is easy to observe that as $E_1$ increases the achievable SRP of the objective function in \eqref{opt11} also increases.
Thus, the minimum $E_1$ in \eqref{power allocation problem e1} is achieved when the SRP constraint in \eqref{power allocation problem e1 b}, i.e., $P_{\text{I}}(p_1,B_{t,1}) \ge\eta_1$, holds as the equality. Moreover, given any arbitrary power budget $E_1$ in problem \eqref{opt11}, the interrelation between the power allocation $p_1$ and the number of transmit sounding beams $B_{t,1}$ points to a fundamental tradeoff between them, which is demonstrated in the following theorem.

\begin{Theorem} \label{relation of bt1}
Consider the following non-linear programming
\begin{subequations} \label{opt1}
\begin{align}
(\hat{p}_1, \widehat{B}_{t,1} )  &= \argmax_{p_1,B_{t,1}} P_{\text{I}}(p_1, B_{t,1}) \label{theorem 3a}
\\
& \text{subject to} ~~ p_1 B_{t,1} N_r/ N = E_1, ~~  B_{t,1} \geq \widetilde{B}_{t,1}, \label{theorem 3b}
\end{align}
\end{subequations}
where $E_1$ is an arbitrary power budget. The solution to \eqref{opt1} is given by $\widehat{B}_{t,1} =\widetilde{B}_{t,1}$ and $p_1=\frac{E_1N}{ \widetilde{B}_{t,1} N_r}$.
\end{Theorem}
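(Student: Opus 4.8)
The plan is to exploit that the Tracy--Widom CDF $F_2(\cdot)$ is strictly increasing, so maximizing $P_{\text{I}}(p_1,B_{t,1})$ over the feasible set of \eqref{opt1} is equivalent to maximizing the argument of $F_2$ in the (approximate) SRP expression \eqref{aoa prob temp}, namely
\begin{align}
g(p_1,B_{t,1}) = \frac{(1-(2L-1)\mu_1)h_{\min}^2\frac{p_1 B_{t,1}}{N_t} - 4\sigma^2\mu_{N_r,B_{t,1}}}{4\sigma^2\sigma_{N_r,B_{t,1}}}. \nonumber
\end{align}
First I would use the equality in \eqref{theorem 3b} to eliminate $p_1$: it pins the product $p_1 B_{t,1}=E_1 N/N_r$, so the signal term $h_{\min}^2 p_1 B_{t,1}/N_t = h_{\min}^2 E_1 N/(N_r N_t)$ is constant, independent of how the budget is split. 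Hence $g$ collapses to a function of the single variable $B_{t,1}$, entering only through $\mu_{N_r,B_{t,1}}=(N_r^{1/2}+B_{t,1}^{1/2})^2$ and $\sigma_{N_r,B_{t,1}}$.

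The goal is then to prove $g$ is non-increasing in $B_{t,1}$ on $[\widetilde{B}_{t,1},\infty)$, which forces the maximizer to the left endpoint $\widehat{B}_{t,1}=\widetilde{B}_{t,1}$; the matching power follows immediately from the constraint as $\hat{p}_1=E_1 N/(\widetilde{B}_{t,1}N_r)$. To expose the monotonicity I would put $a=N_r^{1/2}$, $x=B_{t,1}^{1/2}$, use $\sigma_{N_r,B_{t,1}}=(a+x)^{4/3}(ax)^{-1/3}$, and rewrite
\begin{align}
g = \frac{(ax)^{1/3}}{(a+x)^{4/3}}\bigl(K-(a+x)^2\bigr), \quad K \triangleq \frac{(1-(2L-1)\mu_1)h_{\min}^2 E_1 N}{4\sigma^2 N_r N_t}, \nonumber
\end{align}
then (treating $B_{t,1}$ as continuous) examine
\begin{align}
\frac{d\ln g}{dx} = \frac{a-3x}{3x(a+x)} - \frac{2(a+x)}{K-(a+x)^2}. \nonumber
\end{align}

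The hard part will be precisely this monotonicity, because the Tracy--Widom scaling $\sigma_{N_r,B_{t,1}}$ is \emph{not} monotone in $B_{t,1}$: the numerator of $g$ decreases (as $\mu_{N_r,B_{t,1}}$ grows), but for small $B_{t,1}$ the denominator also shrinks, so I must show the numerator's decrease dominates. The log-derivative is immediately negative whenever $3x\ge a$, since then the first term is non-positive and the second is strictly negative (using $K>(a+x)^2$ in the useful regime); this settles all but the small-beam-count case. The genuine obstacle is the range $3x<a$, where cross-multiplying reduces $d\ln g/dx\le 0$ to the clean inequality $K\le (a+x)^2(a+3x)/(a-3x)$. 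Establishing this is the crux and appears to require a moderate-SNR/SRP restriction (for the smallest $x$ the bound is $\approx N_r$), in line with the low-SNR advantage of adaptive sensing noted earlier; once it holds on $[\widetilde{B}_{t,1},\infty)$, $g$ is non-increasing and the optimum is $(\hat{p}_1,\widehat{B}_{t,1})=(E_1 N/(\widetilde{B}_{t,1}N_r),\,\widetilde{B}_{t,1})$.
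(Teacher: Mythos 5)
Your proposal follows the same route as the paper: eliminate $p_1$ through the budget constraint $p_1B_{t,1}N_r/N=E_1$ (which freezes the signal term $p_1B_{t,1}$), invoke the strict monotonicity of $F_2$, and reduce the problem to showing that the argument of $F_2$ is decreasing in $B_{t,1}$, so the maximizer sits at the left endpoint $\widetilde{B}_{t,1}$. The difference is in how the monotonicity is handled. The paper disposes of it in one sentence (``taking the first derivative \ldots reveals that the argument is a decreasing function of $B_{t,1}$''), whereas you actually compute the log-derivative, and your algebra is correct: with $a=N_r^{1/2}$, $x=B_{t,1}^{1/2}$, $\sigma_{N_r,B_{t,1}}=(a+x)^{4/3}(ax)^{-1/3}$, one gets exactly
\begin{align}
\frac{d\ln g}{dx}=\frac{a-3x}{3x(a+x)}-\frac{2(a+x)}{K-(a+x)^2}.\nonumber
\end{align}
In doing so you have exposed a genuine gap in the paper's own argument: the claimed monotonicity is \emph{not} unconditional. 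For $3x\ge a$ (i.e.\ $B_{t,1}\ge N_r/9$) the derivative is indeed negative, but for $B_{t,1}<N_r/9$ it is negative only when $K\le (a+x)^2(a+3x)/(a-3x)$, and this fails for large $K$ (large $E_1/\sigma^2$): as $K\to\infty$, $g(x)/K\to (ax)^{1/3}(a+x)^{-4/3}$, which is maximized at $x=a/3$, i.e.\ $B_{t,1}=N_r/9>\widetilde{B}_{t,1}$ whenever $N_r>9\widetilde{B}_{t,1}$. So the theorem's conclusion, as stated without any restriction on $E_1/\sigma^2$, is not literally established by either proof; the paper implicitly assumes the moderate-SNR regime you identify. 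Your proposal is therefore the same proof, carried out honestly, and it correctly flags (rather than resolves) the one step the paper hand-waves; to close it you would need to add the hypothesis $K\le (\sqrt{N_r}+\sqrt{\widetilde{B}_{t,1}})^2(\sqrt{N_r}+3\sqrt{\widetilde{B}_{t,1}})/(\sqrt{N_r}-3\sqrt{\widetilde{B}_{t,1}})$ (vacuous when $\widetilde{B}_{t,1}\ge N_r/9$), under which your case analysis completes the argument.
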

\begin{proof}
Substituting constraint $p_1 =\frac{E_1N}{{B}_{t,1} N_r}$ in \eqref{theorem 3b}  into the objective function in \eqref{theorem 3a}, we first show that $P_{\text{I}}(\frac{E_1N}{{B}_{t,1} N_r}, B_{t,1})$ in \eqref{theorem 3a} is a monotonically decreasing function of the number of transmit sounding beams $B_{t,1}$ for a fixed $E_1$. Specifically, substituting $\mu_{{N_r,B_{t,1}}}  =(N_r^{1/2}  + B_{t,1}^{1/2}) ^2$ and $
\sigma_{{N_r,B_{t,1}}} = (N_r^{1/2}  + B_{t,1}^{1/2}) (N_r^{-1/2}  + B_{t,1}^{-1/2}) ^{1/3} $ of \eqref{aoa prob temp} into $P_{\text{I}}(\frac{E_1N}{{B}_{t,1} N_r}, B_{t,1})$ gives
\beq
    P_I\left(\frac{E_1N}{{B}_{t,1} N_r}, B_{t,1}\right)=F_2\left(\frac{h_{\text{min}}^2     (1-(2L-1)\mu_1)^2 E_1N - 4N_t N_r\sigma^2 (N_r^{\frac{1}{2}}  + B_{t,1}^{\frac{1}{2}}) ^2 }{4N_tN_r\sigma^2 (N_r^{\frac{1}{2}}  + B_{t,1}^{\frac{1}{2}}) (N_r^{-\frac{1}{2}}  + B_{t,1}^{-\frac{1}{2}}) ^{\frac{1}{3}}} \right).
\label{plug mu sig}
\eeq
Taking the first derivative of the argument inside $F_2(\cdot)$ in \eqref{plug mu sig} with respect to $B_{t,1}$ reveals that the argument is a decreasing function of $B_{t,1}$. This implies that the $P_I(\frac{E_1N}{{B}_{t,1} N_r}, B_{t,1})$ in \eqref{plug mu sig} is a monotonically decreasing function of $B_{t,1}$. Hence, \eqref{opt1} is maximized when $B_{t,1}=\widetilde{B}_{t,1}$, which completes the proof.
\end{proof}

\begin{figure}
\centering
\includegraphics[width=.56\textwidth]{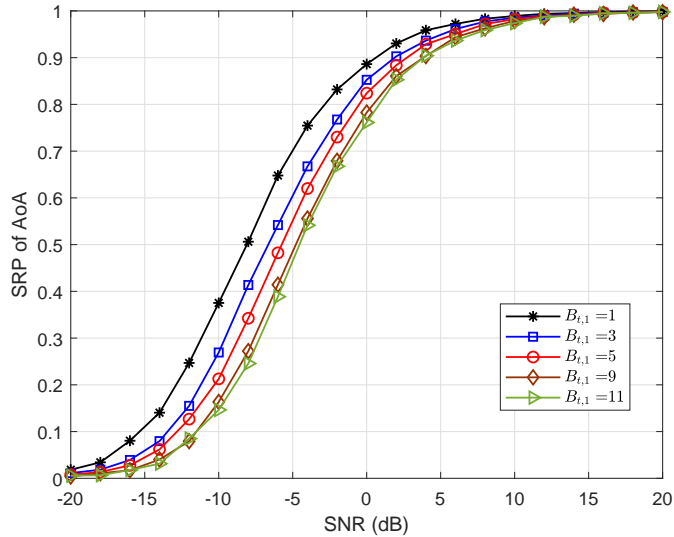}
\caption{SRP of AoA vs. SNR (dB) ($N_r = 20,N_t=64, L=4, N=4, s=1,E_1=10, \widetilde{B}_{t,1}=1$).} \label{AoA_Bt}
\end{figure}

Therefore, based on Theorem \ref{relation of bt1}, the maximum SRP of AoA estimation  for a given $E_1$ is given by
\beq
  P_I\left(\frac{E_1N}{ \widetilde{B}_{t,1} N_r}, \widetilde{B}_{t,1}\right) =
  F_2  \left(\frac{{h_{\text{min}}^2     (1-(2L-1)\mu_1)^2E_1 N /N_r  }-4\sigma^2  N_t \mu_{{N_r, \widetilde{B}_{t,1}}}}{4N_t\sigma^2 \sigma_{{N_r, \widetilde{B}_{t,1}}}}\right) . \label{function va}
\eeq
We demonstrate Theorem \ref{relation of bt1} via numerical simulations in Fig.~\ref{AoA_Bt}, in which the SRP of AoA is evaluated for different numbers of channel uses $B_{t,1}\in\{1,3,5,9,11\}$. The simulation parameters $N_r =20$, $N_t=64$, $L=4$, $N=4$, $s=1$, $E_1=10$, and $\widetilde{B}_{t,1}=1$ are assumed. The curves clearly show that the highest SRP is achieved when $B_{t,1}=1$.

Now, based on Theorem \ref{relation of bt1}, the solution to \eqref{power allocation problem e1} is readily obtained as follows.
In order to make SRP of AoA higher than $\eta_1$ in \eqref{power allocation problem e1},  we solve the inverse function in \eqref{function va} with respect to $E_1$ and conclude that the resource allocation of Stage I should meet the following conditions:
\begin{subnumcases} {\label{resource E1} }
 E_1 = \frac{ 4 \sigma^2 N_t N_r(F_2^{-1}(\eta_1) \sigma_{{N_r,\widetilde{B}_{t,1}}} +  \mu_{{N_r,\widetilde{B}_{t,1}}})}{h_{\text{min}}^2 (1-(2L-1)\mu_1)^2 N},\label{bound E1} \\
   B_{t,1}=\widetilde{B}_{t,1}, \label{bound Bt1}\\
 p_1=  \frac{E_1N}{ \widetilde{B}_{t,1} N_r},\label{bound p1}
\end{subnumcases}
where $F_2^{-1}(\cdot)$ is the inverse function of $F_2(\cdot)$. By using similar procedures of the proof of Theorem \ref{relation of bt1}, we observe the following more general result about the number of vectors $d$ in the signal model stated Lemma \ref{lemma SOMP}.
\begin{Corollary} \label{effect of d}
The bound in \eqref{SOMP prob} is a monotonically decreasing function of the number of measurement vectors $d$.
\end{Corollary}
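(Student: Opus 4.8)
The plan is to reduce the claim to a statement about the argument of the Tracy--Widom CDF. Since $F_2(\cdot)$ is a cumulative distribution function it is monotonically nondecreasing, so the lower bound in \eqref{SOMP prob} is decreasing in $d$ as soon as its argument is. Writing $A = (1-(2L-1)\mu)^2 C_{\text{min}}^2$ and holding $A$, $\mu$, $\sigma$, $M$, and $L$ fixed, it thus suffices to prove that
\begin{align}
g(d) \triangleq \frac{A - 4\sigma^2 \mu_{M,d}}{4\sigma^2 \sigma_{M,d}} \nonumber
\end{align}
is strictly decreasing in $d$; then $F_2(g(d))$ is decreasing as well.

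First I would put the scaling constants into a differentiable form. Setting $u = \sqrt{d}$ and $v = \sqrt{M}$ and using $M^{-1/2} + d^{-1/2} = (u+v)/(uv)$, they collapse to $\mu_{M,d} = (u+v)^2$ and $\sigma_{M,d} = (u+v)^{4/3}(uv)^{-1/3}$. Substituting these into $g$ and differentiating with respect to $u$ (which is increasing in $d$, so the sign of the derivative is preserved), the quotient rule and a routine factorization reduce the sign of $g'$ to that of
\begin{align}
\Psi(u) \triangleq A(v - 3u) - 4\sigma^2 (u+v)^2 (3u+v). \nonumber
\end{align}
The goal becomes showing $\Psi(u) < 0$ for the admissible range $u \ge 1$.

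The crux --- and the step I expect to be the main obstacle --- is the sign of $\Psi$. When $3u \ge v$, i.e. $9d \ge M$, the first term is nonpositive while $4\sigma^2(u+v)^2(3u+v)$ is strictly positive, so $\Psi(u)<0$ immediately. The delicate case is $9d < M$, which is in fact the operative one since in Stage~I we have $d = B_{t,1} \ll N_r = M$. There the first term is positive, and I would bound it using $v - 3u < v = \sqrt{M}$ together with $(u+v)^2(3u+v) \ge v^3 = M^{3/2}$, so that $\Psi(u) < A\sqrt{M} - 4\sigma^2 M^{3/2}$. This is negative precisely under the informative-SNR condition $A \le 4\sigma^2 M$ (equivalently $(1-(2L-1)\mu)^2 C_{\text{min}}^2 \le 4\sigma^2 M$), which is exactly the regime in which the bound \eqref{SOMP prob} is meaningful. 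The computation parallels the derivative argument in the proof of Theorem \ref{relation of bt1}, the only new ingredient being the case split in $d$ and the elementary bound that dispatches the small-$d$ regime.
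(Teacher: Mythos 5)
Your reduction and algebra are correct: with $u=\sqrt d$, $v=\sqrt M$ one indeed gets $\sigma_{M,d}=(u+v)^{4/3}(uv)^{-1/3}$, and the sign of $\partial g/\partial u$ is that of $\Psi(u)=A(v-3u)-4\sigma^2(u+v)^2(3u+v)$; I verified the factorization. The paper's own proof is far less careful --- it simply substitutes the definitions of $\mu_{M,d}$ and $\sigma_{M,d}$ into \eqref{SOMP prob} and declares it ``straightforward'' that the resulting expression decreases in $d$ (the same move as in the proof of Theorem \ref{relation of bt1}). Your derivative computation is the argument the paper should have written, and it exposes something the paper misses: the monotonicity is \emph{not} unconditional. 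When $9d<M$ and $A=(1-(2L-1)\mu)^2C_{\min}^2$ is large relative to $4\sigma^2 M$, one has $\Psi>0$ and the bound increases in $d$. Concretely, with $M=100$, $\sigma^2=1$, $A=10^4$, the argument of $F_2$ equals about $209.5$ at $d=1$ and about $232.8$ at $d=4$, so Corollary \ref{effect of d} as stated is false; your hypothesis $A\le 4\sigma^2 M$ (or $9d\ge M$) is genuinely needed, not a removable artifact of your bounding step.

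One quibble: your claim that $A\le 4\sigma^2 M$ is ``exactly the regime in which the bound is meaningful'' has the logic backwards. When $A>4\sigma^2\mu_{M,d}$ the argument of $F_2$ is positive and the lower bound is closest to one, which is when the guarantee is strongest; that is precisely where monotonicity can fail. The saving grace is that $F_2$ is essentially flat at $1$ there, so the violation is practically invisible --- but that is an argument for why the corollary is harmless as a heuristic, not a proof of the statement as written. The clean fix is to add your hypothesis to the statement of the corollary (and to note that the proof of Theorem \ref{relation of bt1} has the identical structure and rests on the identical unstated assumption).
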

\begin{Remark} \label{remarkD}
Corollary \ref{effect of d} states the effect of $d$ on the recovery performance of SOMP. It can be interpreted in the following way.
The increase of the number of measurement vectors $d$ has an effect of increasing the number of columns of $\bC$  in Lemma \ref{lemma SOMP} while keeping the $C_{\min}$ unchanged.
This leads to the increase of  the noise power due to the increase in the dimension of $\bN$, which in turn reduces SRP.
\end{Remark}

{
When it comes to the number of channel uses $B_{t,2}$ at Stage II, we cannot reach the same conclusion as Theorem \ref{relation of bt1} because the constant $ \mu_2$ in \eqref{AoDs pro} changes with $B_{t,2}$.
Therefore, Given $B_{t,1} = \widetilde{B}_{t,1}$ and the total number of channel uses $K$ for channel sounding, $B_{t,2}$ is determined
by \eqref{number of uses}, i.e., $K=\widetilde{B}_{t,1}{N_r}/{N} +  B_{t,2}$.
Then, the solution to \eqref{power allocation problem e2} is given by
\begin{subnumcases} {\label{resource E2} }
E_2  = \frac{ 4 \sigma^2 N_t (F_2^{-1}(\eta_2) \sigma_{{B_{t,2},L}} + \mu_{{B_{t,2},L}})}{h_{\text{min}}^2 (1-(2L-1)\mu_2)^2}, \label{bound E2}\\
 B_{t,2}=K-\widetilde{B}_{t,1}{N_r}/{N}\label{bound Bt2},\\
 p_2 = \frac{E_2}{K-\widetilde{B}_{t,1}{N_r}/{N}}. \label{bound p2}
\end{subnumcases}
}

In summary, after solving the two-subproblems in \eqref{power allocation problem e1} and \eqref{power allocation problem e2}, we successfully solve the problem in \eqref{power allocation problem}. The specific resource allocations for two stages are shown in \eqref{resource E1} and \eqref{resource E2}, respectively.
In particular, when the total power budget $E \ge E_{1}+E_{2}$, the joint SRP of AoA and AoD is at least $\eta_1 \eta_2$.

\begin{figure}
\centering
\includegraphics[width=.56\textwidth]{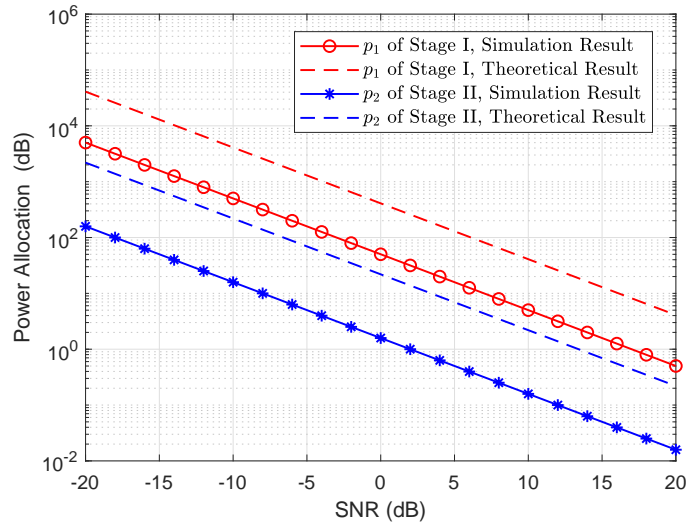}
\caption{{Power allocation to achieve the required SRP vs. SNR (dB) ($N_r = 20,N_t=64, L=4, N=4, s=1, \widetilde{B}_{t,1}=1, \eta_1=\eta_2=0.95$).}} \label{verify resource allocation}
\end{figure}

{
In Fig. \ref{verify resource allocation}, we illustrate the designed resource allocations in \eqref{resource E1} and \eqref{resource E2} with the simulation results.
The parameters are set as $\eta_1 = \eta_2 = 0.95$.
The curves of theoretical results calculate the power allocations $p_1$ and $p_2$ through \eqref{bound p1} and \eqref{bound p2}.
The curves of simulation results are the required power allocations to achieved SRPs of $\eta_1$ and $\eta_2 $.
The simulation parameters $N_r =20$, $N_t=64$, $L=4$, $N=4$, $s=1$ are assumed.
In Fig. \ref{verify resource allocation}, to achieve the same required SRP, i.e., $\eta_1=\eta_2=0.95$, Stage II requires less power allocation than Stage I. This is because the design of the sounding beams for Stage II saves the power consumption. Overall, the trend of the theoretical results is consistent with that of the simulation results, which validates the proposed resource allocation strategies in \eqref{resource E1} and \eqref{resource E2}.
}

\begin{figure}
\centering
\includegraphics[width=.56\textwidth]{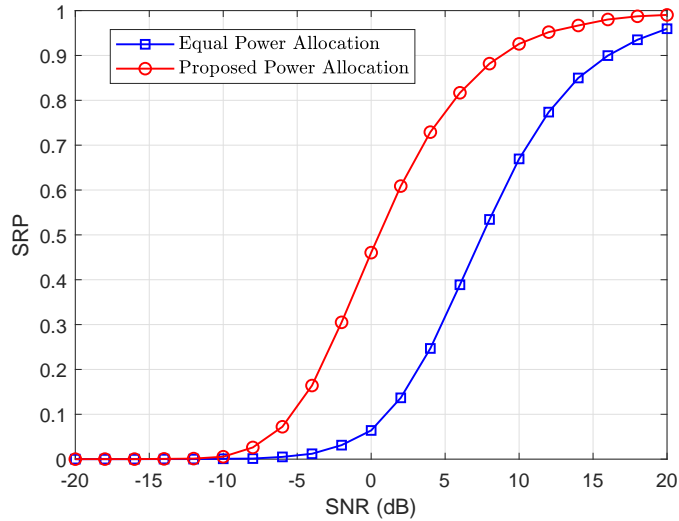}
\caption{Evaluation of the power allocation strategy with  equal power allocation ($N_r = 20,N_t=64, L=4, N=4, s=1, \widetilde{B}_{t,1}=1, \eta_1=\eta_2=0.95$).} \label{verify power allocation}
\end{figure}

In Fig. \ref{verify power allocation}, we demonstrate the SRP of AoA and AoD achieved by the power allocations in \eqref{resource E1} and \eqref{resource E2} compared to the equal power allocation. The power allocations $p_1$ and $p_2$ are calculated by setting $\eta_1=\eta_2=0.95$ and $\sigma=0.1$ in \eqref{resource E1} and \eqref{resource E2}. The simulation parameters are $N_r =20$, $N_t=64$, $L=4$, $N=4$, $s=1$. As we can see from Fig. \ref{verify power allocation}, the proposed power allocation achieves much higher SRP than that of the equal power allocation, which verifies the effectiveness of the proposed power allocation strategy.

\section{Extension to Two-stage Method with Super Resolution} \label{section atomic}

In this section, we extend the proposed two-stage method to the one with super resolution, through which we aim to address the issue of unresolvable quantization errors. Among the existing works, there are two directions to solve the quantization error for off-grid effect. Firstly,  the works in \cite{yang2012off,tang2018off,qi2018off} model the response vector as the summation of on-grid part and the approximation error, in which the sparse Bayesian inference is utilized  to estimate the approximation error. Secondly, the atomic norm minimization has been proposed in \cite{OffGridCS,MmvAtomic,superMM}, which can be viewed as the case when the infinite dictionary matrix is employed. Based on atomic norm minimization, the sparse signal recovery is reformulated as a semidefinite programming. Compared to the sparse Bayesian inference, one advantage of atomic norm minimization is that the recovery guarantee is analyzable \cite{OffGridCS,MmvAtomic,superMM}. Following the methodology of the atomic norm minimization, in this section, we aim to estimate the AoAs and AoDs,  i.e., $\{f_{r,1},\ldots,f_{r,L}\}$ and $\{f_{t,1},\ldots,f_{t,L}\}$, under the proposed two-stage framework.

\subsection{Super Resolution AoA Estimation}
The sounding beams of Stage I, i.e., $\bF_{b,1}$ and $\bW_{b,1}$, are designed according to \eqref{F beams AoAs} and \eqref{RSB 1}. By using the exact expression of $\bH$ in \eqref{compact channel} rather than the quantized version in \eqref{redundant channel estimation}, the observations for Stage I is given by
\begin{align}
\bY_1 &= \bW_{b,1}^H \bH \bF_{b,1} + \bW_{b,1}^H \bN_1 \nonumber \\
 & =\bW_{b,1}^H {{\bA}_{r}}\diag(\mathbf{h})\bA_{t}^{H}{\bF_{b,1}}+\bW_{b,1}^H \bN_1 \nonumber \\
 & =\bW_{b,1}^H {{\bA}_{r}}{{\mathbf{C}}_{r}}+\bW_{b,1}^H \bN_1, \label{music s1}
\end{align}
where $ \bY_1\in\C^{N_r \times B_{t,1}}$ and $\bC_r  =  \diag(\mathbf{h})\bA_{t}^{H}{\bF_{b,1}} \in \C^{L \times B_{t,1}}$.
Since $\bW_{b,1}=\bS_{N_r}$ in \eqref{RSB 1}, projecting $\bY_1$ onto $\bW_{b,1}$ yields
\begin{align}
\widetilde{\bY}_1 = \bW_{b,1}\bY_1 = {{\bA}_{r}}{{\mathbf{C}}_{r}}+ \bN_1. \label{music s1 tilde}
\end{align}

The observation in \eqref{music s1 tilde} is rewritten by explicitly  involving the array response vectors,
\begin{align}
\widetilde{\bY}_1
 =[{\ba_{r}}({{f}_{r,1}}),\ldots ,{\ba_{r}}({{f}_{r,L}})]{{\mathbf{C}}_{r}}+ \bN_1
 =\bR_1 +  \bN_1,
 \label{atomic observation}
\end{align}
where $\bR_1 =[{\ba_{r}}({{f}_{r,1}}),\ldots ,{\ba_{r}}({{f}_{r,L}})]{{\mathbf{C}}_{r}} \in \C^{N_r  \times B_{t,1}}$.
The atom ${{\bA}_{r}}(f,\mathbf{b})\in {\bC^{N_r\times B_{t,1}}}$ is defined in \cite{MmvAtomic,OffGridCS} as
${{\bA}_{r}}(f,\mathbf{b})={\ba_{r}}(f){{\mathbf{b}}^{H}}$, where $f\in [0,1)$ and $\bb \in \C^{B_{t,1} \times 1}$ with ${{\left\| \mathbf{b} \right\|}_{2}}=1$.
We let the collection of all such atoms be the set
$\mathcal{A}_{r}=\{{{\bA}_{r}}(f,\mathbf{b}): f\in [0,1), {{\left\| \mathbf{b} \right\|}_{2}}=1\}$.
Obviously, the cardinality of $\mathcal{A}_r$ is infinite. 
The matrix ${{\mathbf{R}}_{1}}$ in \eqref{atomic observation} can be written as the linear combination among the atoms from the atomic set ${{\mathcal{A}}_{r}}$,
\begin{align}
{{\mathbf{R}}_{1}}=\sum\limits_{l=1}^{L}{{{[\mathbf{c}_r]}_{l}}{{\bA}_{r}}({{f}_{r,l}},{{\mathbf{b}}_{l}})}=\sum\limits_{l=1}^{L}{{{[\mathbf{c}_r]}_{l}}{\ba_{r}}({{f}_{r,l}})}\mathbf{b}_{l}^{H},
\label{atom repre}
\end{align}
where $\bc_r \in \R^{L \times 1}$ is the coefficient vector with $[\bc_r]_l \ge 0$, and it has the relationship $[\bC_r]_{l,:} = [\bc_r]_l \bb_l^H,~ \forall l = 1, \ldots, L$.
Observing \eqref{atom repre}, the dimension of vector $\bc_r$, i.e., $L$, can be interpreted as the sparest representation of $\bR_1$
in the context of the atomic set $\cA_r$. Therefore, in order to seek the sparsest representation,
after taking the noise in \eqref{atomic observation} into account, the reconstruction problem is formulated by
\begin{align}
	\underset{\bR_1}{\mathop{\min }}\,{\left\| \bR_1 \right\|}_{\mathcal{A}_r,0}+ \frac{\lambda_1}{2}  \| \widetilde{\bY}_1-\bR_1\|_F^2, \label{super L0}
\end{align}	
where $\lambda_1>0$ is the penalty parameter, and ${\left\| {{\mathbf{R}}_{1}} \right\|}_{{{\mathcal{A}}_{r}},0}$ is defined as
\begin{subequations}\label{L0 atomic}
\begin{align}
   {{\left\| {{\mathbf{R}}_{1}} \right\|}_{{{\mathcal{A}}_{r}},0}}&=\underset{\mathbf{c}_r}{\mathop{\inf }}\,{{\left\| \mathbf{c}_r \right\|}_{0}} \\
  \text{subject to }&{{\mathbf{R}}_{1}}=\sum\limits_{l=1}^{L}{{{[\mathbf{c}_r]}_{l}}{{\bA}_{r}}({{f}_{r,l}},{{\mathbf{b}}_{l}})}, \\
 & \text{ }{{\bA}_{r}}({{f}_{r,l}},{{\mathbf{b}}_{l}})\in {{\mathcal{A}}_{r}},{{[\mathbf{c}_r]}_{l}}\ge 0,
\end{align}
\end{subequations}
with $\| \bR_1\|_{\cA_r,0}$ revealing the minimal number of atoms in $\bR_1$.
When the sparest representation of $\bR_1$, i.e., $\{{{{[\mathbf{c}_r]}_{l}}{\ba_{r}}({{f}_{r,l}})}\mathbf{b}_{l}^{H}\}_{l=1}^L$, is found by solving \eqref{super L0}, the AoAs $\{f_{r,l}\}_{l=1}^L$
can be obtained from the atomic decomposition in \eqref{atom repre}.
However, since the minimization problem in \eqref{L0 atomic} is combinatorial, it is not tractable to calculate the value of ${{\left\| {{\mathbf{R}}_{1}} \right\|}_{{{\mathcal{A}}_{r}},0}}$. To overcome the challenge, the problem in \eqref{super L0} is relaxed as,
\begin{align}
	\underset{\bR_1}{\mathop{\min }}\,{\left\| \bR_1 \right\|}_{\mathcal{A}_r,1}+ \frac{\lambda_1}{2} \| \widetilde{\bY}_1-\bR_1 \|_F^2, \label{robust L1 s1}
\end{align}	
where $ {{\left\| {{\mathbf{R}}_{1}} \right\|}_{{{\mathcal{A}}_{r}},1}}$ is the atomic norm of $\bR_1$ defined by
\begin{subequations} \label{L1 atomic}
\begin{align}
   {{\left\| {{\mathbf{R}}_{1}} \right\|}_{{{\mathcal{A}}_{r}},1}}&=\underset{\mathbf{c}_r}{\mathop{\inf }}\,{{\left\| \mathbf{c}_r \right\|}_{1}} \\
  \text{subject to }&{{\mathbf{R}}_{1}}=\sum\limits_{l=1}^{L}{{{[\mathbf{c}_r]}_{l}}{{\bA}_{r}}({{f}_{r,l}},{{\mathbf{b}}_{l}})}, \\
 & \text{ }{{\bA}_{r}}({{f}_{r,l}},{{\mathbf{b}}_{l}})\in {{\mathcal{A}}_{r}},{{[\mathbf{c}_r]}_{l}}\ge 0.
\end{align}
\end{subequations}
It is noted that in \eqref{L1 atomic}, the atomic norm $\|\bR_1 \|_{\cA_r,1}$ is to minimize the summation of entries in $\mathbf{c}_r$ instead of the number of non-zero elements in  \eqref{L0 atomic}.

Different from the intractability of \eqref{L0 atomic}, the problem in \eqref{L1 atomic} can be efficiently solved by semi-definite programming \cite{OffGridCS}:
\begin{subequations} \label{atom eq}
\begin{align}
  & {{\left\| {{\mathbf{R}}_{1}} \right\|}_{{{\mathcal{A}}_{r}},\text{1}}}= \underset{\bu,\bZ}{\mathop{\inf }}\,\frac{1}{2}\text{tr}\left( \text{Toeplitz}(\mathbf{u}) \right)+\frac{1}{2}\text{tr}(\mathbf{Z})  \\
 & \text{subject to }
 \left[
 \begin{matrix}
   \text{Toeplitz}(\mathbf{u}) & {{\mathbf{R}}_{1}}  \\
   \mathbf{R}_{1}^H & \mathbf{Z}  \\
\end{matrix}
\right]\succeq \mathbf{0},
\end{align}
\end{subequations}
where $\mathbf{u}\in {\bC^{N_r\times 1}},\mathbf{Z}\in {\bC^{B_{t,1}\times B_{t,1}}}$, and $\text{Toeplitz}(\mathbf{u})\in {\bC^{N_r\times N_r}}$ denotes the Hermitian Toeplitz matrix generated by the vector $\bold{u}$. Plugging \eqref{atom eq} into \eqref{robust L1 s1} gives
\begin{subequations} \label{robust L1 s1 T}
\begin{align}
  & \underset{\bu,\bZ,\bR_1}{\mathop{\inf }} ~ \text{tr}\left( \text{Toeplitz}(\mathbf{u}) \right)+\text{tr}(\mathbf{Z}) + {\lambda_1}  \| \widetilde{\bY}_1-\bR_1 \|_F^2\\
& \text{subject to }
\bX = \left[
  \begin{matrix}
   \text{Toeplitz}(\mathbf{u}) & {{\mathbf{R}}_{1}}  \\
   \mathbf{R}_{1}^H & \mathbf{Z}  \\
\end{matrix}
\right], ~  \bX \succeq \mathbf{0}.
\end{align}
\end{subequations}
It is straightforward to find that \eqref{robust L1 s1 T} is convex, where ADMM can be employed to accelerate the computation.
The augmented Lagrangian of \eqref{robust L1 s1 T} is expressed as
\begin{align}
\mathcal{L}(\bu,\bZ,\bR_1,\bX, \bLambda)
&=\text{tr}\left( \text{Toeplitz}(\mathbf{u}) \right)+\text{tr}(\mathbf{Z}) + {\lambda_1}  \| \widetilde{\bY}_1- \bR_1 \|_F^2 \nonumber  \\
&~~+ \left< \bLambda, \bX- \left[\begin{matrix}
   \text{Toeplitz}(\mathbf{u}) & {{\mathbf{R}}_{1}}  \\
   \mathbf{R}_{1}^H & \mathbf{Z}  \\
\end{matrix}
\right]\right> + \frac{\rho}{2}\lA  \bX- \left[\begin{matrix}
   \text{Toeplitz}(\mathbf{u}) & {{\mathbf{R}}_{1}}  \\
   \mathbf{R}_{1}^H & \mathbf{Z}  \\
\end{matrix}
\right] \rA_F^2, \label{aug lag}
\end{align}
where $\bX \in \C^{(N_r + B_{t,1}) \times (N_r + B_{t,1})}$ and $\bLambda  \in \C^{(N_r + B_{t,1}) \times (N_r + B_{t,1})}$ are Hermitian matrices, and $\rho$ is the Lagrangian multiplier. Then, with $t$ being the iteration index, we iteratively update the variables in \eqref{aug lag} as follows:
\begin{align}
           (\bu^{t+1},\bZ^{t+1},\bR_1^{t+1})   &= \argmin_{\bu,\bZ, \bR_1} \mathcal{L}(\bu, \bZ,\bR_1,\bX^t, \bLambda^{t}) ,\label{admm s1}\\
\bX^{t+1} &= \argmin_{\bX\succeq \mathbf{0}} \mathcal{L}(\bu^{t+1} ,\bZ^{t+1} ,\bR_1^{t+1} ,\bX , \bLambda^{t}),\label{admm s2}\\
\bLambda^{t+1} &= \bLambda^{t} + \rho\left( \bX^{t+1}- \left[\begin{matrix}
   \text{Toeplitz}(\mathbf{u}^{t+1}) &   {{\mathbf{R}}_{1}^{t+1}}  \\
   (\mathbf{R}^{t+1}_{1})^H &   \mathbf{Z}^{t+1}
\end{matrix}
\right]  \right).
\end{align}
The solutions of the \eqref{admm s1} and  \eqref{admm s2} are respectively
\begin{align}
[\bu^{t+1}]_i &=
\begin{cases}
\frac{V_i+\rho S_i}{(N_r -t)\rho+N_r}, &i=1\\
\frac {V_i+\rho S_i}{(N_r -t)\rho}, & i=2,\ldots,N_r
\end{cases},\text{with } V_i=\sum_{k=1}^{N_r+1-i}[\bLambda^t]_{k,k-1+i}, ~S_i=\sum_{k=1}^{N_r+1-i}[\bX^t]_{k,k-1+i},\nonumber\\
  \bR_1^{t+1}& = \frac{1}{ \lambda_1+\rho}  (\lambda_1 \widetilde{\bY}_1+ \rho[\bX^{t}]_{1:N_r, N_r + 1:\text{end}}+[\bLambda^{t}]_{1:N_r ,N_r + 1:\text{end}}),\nonumber\\
\bZ^{t+1} &=   \frac{1}{\rho}([\bLambda^{t}]_{N_r+1:\text{end},N_r+1:\text{end}}+ \rho[\bX^{t}]_{N_r+1:\text{end},N_r+1:\text{end}}-\bI_{B_{t,1}}),\nonumber\\
\bX^{t+1}&=
\left[\begin{matrix}
   \text{Toeplitz}(\mathbf{u}^{t+1}) & {{\mathbf{R}}_{1}^{t+1}}  \\
   (\mathbf{R}^{t+1}_{1})^H & \mathbf{Z}^{t+1}  \\
\end{matrix}
\right] -\frac{1}{\rho} \bLambda^t. \nonumber
\end{align}
It is worth noting that in order to guarantee $\bX \succeq 0$ as shown in \eqref{admm s2}, we can set the negative eigenvalues of $\bX^{t+1}$ to $0$.  When the iterative process converges,
the result $\text{Toeplitz}(\bu)$ can be utilized to obtain the estimation of AoAs.
Specifically, we can take Vandermonde decomposition \cite{OffGridCS} for $\text{Toeplitz}(\bu)$,
$\text{Toeplitz}(\bu) = \bV \bD \bV^H$,
where
$\bV=[\ba_r(\hat{f}_{r,1}),\ldots,\ba_r(\hat{f}_{r,L})] \in \C^{N_r \times L}$ with $\{\hat{f}_{r,l} \}_{l=1}^L$ being the estimated AoAs and $\bD = \diag([d_1,\ldots,d_L])\in \C^{L \times L}$.
In practice, it is not necessary to calculate the Vandermonde decomposition of $\text{Toeplitz}(\bu)$ explicitly. Since the column subspace of $\text{Toeplitz}(\bu)$ is equal to $\cR(\bV)$,
the set of AoAs can be estimated from $\text{Toeplitz}(\bu)$ efficiently by spectrum estimation algorithms such as MUSIC or ESPRIT  \cite{MmvAtomic,OffGridCS}.
\subsection{Super Resolution AoD Estimation}
Similarly, the observations for the second stage is given by
\begin{align}
\bY_2 &= \bW_{b,2}^H \bH \bF_{b,2} + \bW_{b,2}^H \bN_2 \nonumber \\
 & =\bW_{b,2}^H{{\bA}_{r}}\diag(\mathbf{h})\bA_{t}^{H}\bF_{b,2}+\bW_2^H \bN_2  \nonumber\\
 & =\bC_t \bA_t^H \bF_{b,2}+\bW_{b,2}^H \bN_2, \label{music s2}
 \end{align}
 where we let $\bC_t = \bW_{b,2}^H{{\bA}_{r}}\diag(\mathbf{h}) \in \C^{L \times L }$.
At Stage II, the observation $\bY_2$ in \eqref{music s2} is rewritten as
\begin{align}
\bY_2^H
 = \bF_{b,2}^H \bA_t \bC_t^H + \bN_2^H\bW_{b,2} =\bR_2^H +\bN_2^H \bW_{b,2} ,
\label{atomic observation 2}
\end{align}
where we let $\bR_2 =\bF_{b,2}^H \bA_t \bC_t^H \in \C^{B_{t,2} \times L}$. Due to the design of $\bF_{b,2}$ in \eqref{expression Fb2}, we have
\begin{align}
\bF_{b,2}^H \bA_t =\sqrt{p_2} [\bA_t]_{1:B_{t,2},:}
=\sqrt{p_2}[\mathbf{a}_t({{f}_{t,1}}),\ldots ,\mathbf{a}_t({{f}_{t,L}})]_{1:B_{t,2},:} . \nonumber
\end{align}
For convenience, we define $\widetilde{\ba}_t(f) =[\ba_t(f)]_{1:B_{t,2}} \in \C^{B_{t,2} \times 1}$ and $\widetilde{\bA}_t = [\widetilde{\ba}_t({{f}_{t,1}}),\ldots ,\widetilde{\ba}_t({{f}_{t,L}})]\in \C^{B_{t,2} \times L}$.
The AoD estimation boils down to extracting $L$ parameters $\left\{ {{f}_{t, l}} \right\}_{l=1}^{L}$ in $\widetilde{\bA}_t$. We let ${{\bA}_{t}}(f,\mathbf{b})\in {\bC^{B_{t,2}\times L }}$ be
${\bA_{t}}(f,\mathbf{b})={\widetilde{\ba}_{t}}(f){{\mathbf{b}}^{H}}$, where
$f\in [0,1)$, $\bb \in \C^{L \times 1}$ with ${{\left\| \mathbf{b} \right\|}_{2}}=1$, and the atomic set $\cA_t$ is defined by
$\mathcal{A}{_{t}}= \{{{\bA}_{t}}(f,\mathbf{b}): f\in [0,1], {{\left\| \mathbf{b} \right\|}_{2}}=1\}$,
Similarly, ${\mathbf{R}}_{2}^H$ in \eqref{atomic observation 2} can be written as the linear combination of the atoms from the set ${\mathcal{A}}_{t}$,
\begin{align}
{\mathbf{R}}_{2}^H =\sum\limits_{l=1}^{L}{{[\mathbf{c}_t]}_l}{{\bA}_{t}}({{f}_{t,l}},{{\mathbf{b}}_{l}})=\sum\limits_{l=1}^{L}{{{[\mathbf{c}_t]}_{l}}{\ba_{t}}({{f}_{t,l}})}\mathbf{b}_{l}^{H},
\nonumber
\end{align}
where $\bc_t \in \R^{L \times 1}$ is the coefficient vector with $[\bc_t]_l \ge 0$.
Therefore, using the similar approach as AoA estimation in \eqref{robust L1 s1}, the AoD estimation problem is given by
\begin{align}
\underset{\bR_2}{\mathop{\min }}\,{\left\| \bR_2^H  \right\|}_{\mathcal{A}_t,1}+ \frac{\lambda_2}{2}  \left\| \bY_2-\bR_2 \right\|_F^2,  \label{robust L1 s2}
\end{align}
where $\lambda_2$ is a penalty parameter.
The problem in \eqref{robust L1 s2} can also be solved in a similar manner as \eqref{robust L1 s1}, and the estimation for AoDs, i.e., $\{\hat{f}_{t,l}\}_{l=1}^L$, can be obtained.

Furthermore, after the AoAs $\{\hat{f}_{r,l}\}_{l=1}^L$ and AoDs $\{\hat{f}_{t,l}\}_{l=1}^L$ are estimated, we can easily calculate the AoA and AoD array response matrix $\widehat{\bA}_r$ and $\widehat{\bA}_t$. Then, by using the channel estimation technique provided in Section \ref{R estimation}, the final channel estimation result is obtained.

\section{Simulation Results} \label{section simulation}
In this section, we evaluate the performance of the proposed two-stage AoA and AoD estimation method and two-stage method with super resolution.
For comparison, we take the OMP-based mmWave channel estimation method \cite{OMPchannel} as our benchmark.
Also, we include the oracle estimator as we discussed in \eqref{oracal estimator}.
The parameter settings for evaluation are as follows.
We assume throughout the simulation $N_r=20,N_t=64$, and the channel model is given by \eqref{channel model}. We let the dimensions of the angle grids for the proposed \text{two-stage} method and
OMP \cite{OMPchannel} be $G_r=sN_r$ and $G_t=sN_t$. The number of paths is $L=4$. The variance of the path gain is $\sigma_l^2 = 1, \forall l$.
The number of RF chains is $N=4$.
The number of channel uses for the estimation task is $K=50$. The minimum allowed transmit beams at Stage I are $\widetilde{B}_{t,1}=1$.
Without loss of generality, for the proposed two-stage framework, the power budget $E = E_1 + E_2$, where $E_1$ and $E_2$ are, respectively,  given by the resource allocations in \eqref{resource E1} and \eqref{resource E2} with $\eta_1=\eta_2=0.95$ and SNR$=20$dB.

To evaluate the estimation performance, we use three performance metrics:
\begin{itemize}
\item
The first metric is the SRP.
The error of the estimated angles are defined as
\begin{align}
\epsilon =\frac{1}{2L} \sum_{l=1}^{L}\left(| f_{r,l}- \hat{f}_{r,l}|^2+| f_{t,l}- \hat{f}_{t,l}|^2\right). \nonumber
\end{align}
We declare the reconstruction is successful if $\epsilon \le 10^{-3}$. Precisely, SRP is defined as
\begin{align}
\text{SRP} = \frac{\text{number of trials with } \epsilon \le 10^{-3}}{\text{number of total trials}}. \nonumber
\end{align}
\item
The second metric is MSE of angle estimation defined as
\begin{align*}
\text{MSE} = \E\left[\sum_{l=1}^{L}\left(| f_{r,l}- \hat{f}_{r,l}|^2+| f_{t,l}- \hat{f}_{t,l}|^2\right)\right].
\end{align*}
\item
The third metric is NMSE of channel estimation defined as
$$\text{NMSE} = \E[\| \bH - \widehat{\bH}\|_F^2/\| \bH \|_F^2],$$
where $\widehat{\bH}$ is the channel estimate.
\end{itemize}

\begin{figure}
\centering
\includegraphics[width=.56\textwidth]{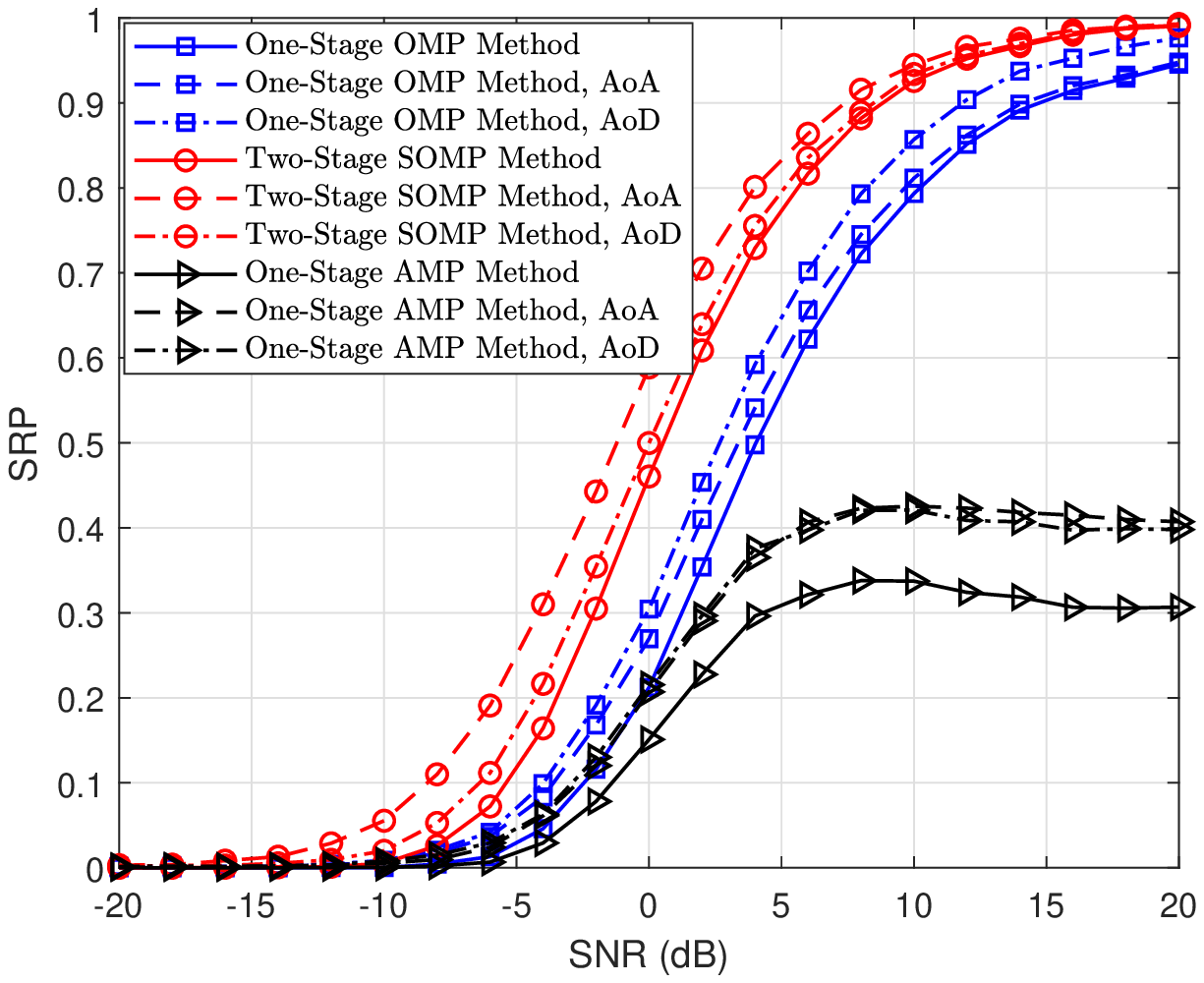}
\caption{SRP vs. SNR (dB) with discrete angles ($N_r = 20,N_t=64, L=4, N=4, K=50,\widetilde{B}_{t,1}=1, s=1$).} \label{bench_pro}
\end{figure}

\begin{figure}
\centering
\includegraphics[width=.56\textwidth]{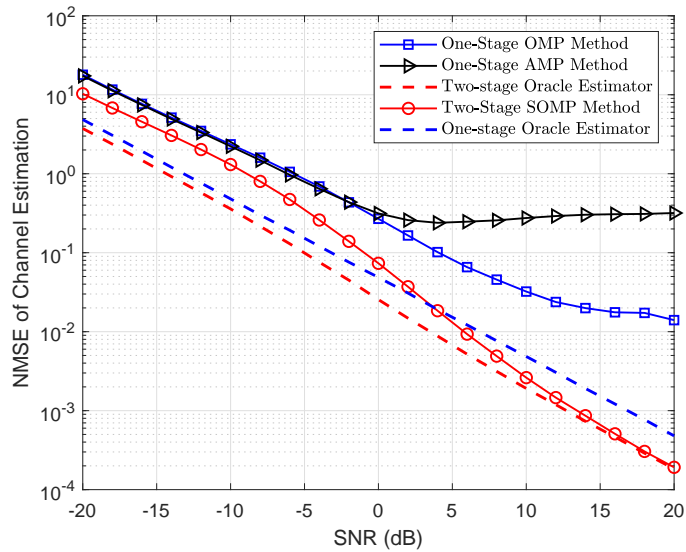}
\caption{NMSE vs. SNR (dB)  with discrete angles  ($N_r = 20,N_t=64, L=4, N=4, K=50,\widetilde{B}_{t,1}=1, s=1$).} \label{bench_NMSE}
\end{figure}

\subsection{Channel Estimation Performance of Two-stage Method with Discrete Angles} \label{section_discrete_simulation}
For the simulations with discrete angles in Figs. \ref{bench_pro}-\ref{bench_NMSE}, the ${{f}_{t,l}}$ and $f_{r,l}$ in \eqref{channel model} are uniformly distributed on the grids of size $G_t=N_t$ and $G_r=N_r$, respectively. Three methods are compared, which are proposed two-stage SOMP method, one-stage OMP method \cite{OMPchannel}, AMP method \cite{donoho2009message}, and oracle method in \eqref{oracal estimator}. We show the SRP in Fig. \ref{bench_pro} and NMSE in Fig. \ref{bench_NMSE}.\par
In Fig. \ref{bench_pro}, considering that oracle method assumes that AoAs and AoDs are known as a priori, we do not illustrate the performance of the oracle method when comparing the SRP. As can be seen in Fig. \ref{bench_pro}, the proposed two-stage SOMP method achieves a higher SRP compared to the benchmarks. It is worth noting that the AMP-based method require the minimal measurements to guarantee the convergence \cite{donoho2009message}. When the number of channel uses is limited, the AMP-based method can not achieve the near one SRP even if the SNR is high. Also, the SRPs of AoA and AoD of the proposed two-stage SOMP method are both higher than those of one-stage OMP method. The improvement of SRP of AoD is because we optimize the sounding beams of the second stage based on the estimated AoA result. For the improvement of SRP of AoA, it is because we allocate more power budget to Stage I according to the proposed resource allocation strategy.\par
Similarly, in Fig. \ref{bench_NMSE}, the proposed two-stage SOMP method has lower NMSE than the one-stage OMP and AMP methods. In addition, we can find from Fig. \ref{bench_NMSE} that the proposed two-stage SOMP method converges to the performance of the oracle method as SNR grows. Overall, Figs. \ref{bench_pro}-\ref{bench_NMSE} verify that the proposed two-stage method outperforms the one-stage OMP in the scenario of discrete angles.

\subsection{Channel Estimation Performance of Two-stage Method with Continuous Angles} \label{sec_cont_simulation}
For this set of simulations in Fig. \ref{bench_CON_Angle_Error}-\ref{bench_CON_H_Error}, we assume the ${{f}_{t,l}}$ and $f_{r,l}$ in \eqref{channel model} are uniformly distributed in $[0,1)$. Four methods are compared, which are the proposed two-stage SOMP method, two-stage method with super resolution, one-stage OMP method \cite{OMPchannel}, and one-stage atomic method \cite{superMM}. When implementing the two-stage SOMP method and one-stage OMP method with the defined angle grids, the estimated angles are located on the defined grids. Fig. \ref{bench_CON_Angle_Error} illustrates the MSE and Fig. \ref{bench_CON_H_Error} illustrates the NMSE of channel estimation.\par
In Fig. \ref{bench_CON_Angle_Error}, the proposed two-stage SOMP method and two-stage method with super resolution outperform the one-stage OMP and one-stage atomic method, respectively. Interestingly, the two-stage SOMP method achieves the minimal MSE when SNR is low. This is because when SNR is low, i.e., $\text{SNR}\le 5\text{dB}$, the noise power is higher than that of the quantization error. Therefore, using the quantized model could reduce the complexity of problem and achieve near-optimal performance. When SNR is high, i.e., $\text{SNR}\ge 5\text{dB}$, the two-stage method with super resolution achieves the minimal MSE. This is because when SNR is high, the quantization error will become dominant, which can not be handled by the grid-based methods. Nevertheless, the Fig. \ref{bench_CON_Angle_Error} verifies that by dividing the estimation into two stages, the estimation of AoAs and AoDs is improved compared to the one-stage estimation.\par
Likewise, in Fig. \ref{bench_CON_H_Error}, the proposed  two-stage SOMP method  and two-stage method with super resolution also achieve lower NMSE than the one-stage OMP and one-stage atomic methods.  Similarly, when SNR is high, the  two-stage method with super resolution shows the minimum NMSE.

\begin{figure}
\centering
\includegraphics[width=.56\textwidth]{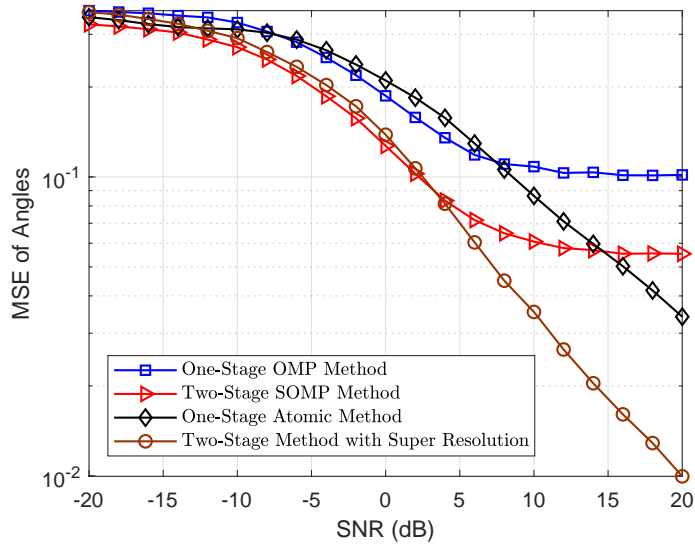}
\caption{MSE vs. SNR (dB) with continuous angles ($N_r = 20,N_t=64, L=4, N=4, K=50,\widetilde{B}_{t,1}=1,s=2$).} \label{bench_CON_Angle_Error}
\end{figure}

\begin{figure}
\centering
\includegraphics[width=.56\textwidth]{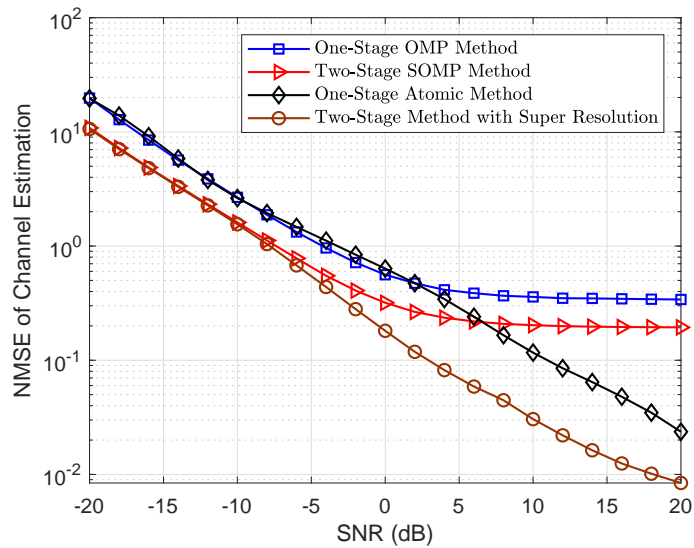}
\caption{NMSE vs. SNR (dB) with continuous angles ($N_r = 20,N_t=64, L=4, N=4, K=50,\widetilde{B}_{t,1}=1,s=2$).} \label{bench_CON_H_Error}
\end{figure}

\subsection{Analysis of Computational Complexity} \label{Section Complexity}
For two-stage method, the computational complexity for the first stage is $\mathcal{O}(L N_r G_r) = \mathcal{O}(sL N_r^2)$, and the complexity for the second stage is $\mathcal{O}(LB_{t,2} G_t)=\mathcal{O}(sL(K-N_r/N)N_t)=\mathcal{O}(sLKN_t)$ with $K$ being the number of channel uses. Therefore, the total computational complexity for two-stage method is $\mathcal{O}(sL N_r^2)+\mathcal{O}(sLKN_t)= \mathcal{O}(sLKN_t)$. However, for the one-stage OMP method, the computational complexity is $\mathcal{O}(LKNG_tG_r) = \mathcal{O}(s^2LKNN_tN_r)$. It is obvious that the two-stage method has much lower computational complexity compared to the one-stage OMP by $\mathcal{O}(sNN_r)$ times.

For the two-stage method with super resolution, in Stage I, the computational complexity of ADMM per iteration is dominated by the eigenvalue decomposition of $\bX^{t+1}$, i.e., $\mathcal{O}(N_r^3)$. Similarly, for Stage II, each iteration has the computational complexity of $\mathcal{O}(B_{t,2}^3)=\mathcal{O}((K-N_r/N)^3)=\mathcal{O}(K^3)$. Given the number of iteration $T$ and $K\ge N_r$, the total computational complexity of the super resolution method is $\mathcal{O}(TN_r^3)+\mathcal{O}(TK^3)=\mathcal{O}(TK^3)$. In order to compare the complexities of the two-stage method with super resolution and one-stage OMP, we consider a simple example as follows. In particular, if $N_r=N_t$ and $K=\mathcal{O}(N_r)$, the complexity of the proposed two-stage method with super resolution is $\mathcal{O}(s^2LN/T)$ times lower than that of the one-stage OMP.
\section{Conclusion} \label{section conclusion}
{
In this paper, the two-stage method for the mmWave channel estimation was proposed. By sequentially estimating AoAs and AoDs  of large-dimensional antenna arrays, the proposed  two-stage method saved the computational complexity as well as channel use overhead compared to the existing methods.
Theoretically, we analyzed the SRPs of AoA and AoD of the proposed two-stage method. Based on the analyzed SRP,
we designed the resource allocation strategy among two stages to guarantee the accurate AoA and AoD estimation.
In addition, to resolve the issue of quantization error, we extended the proposed two-stage method to a version with super resolution.
The numerical simulations showed that the proposed  two-stage method achieves more accurate channel estimation result than the one-stage method.}

\appendices

\section{Proof of Lemma \ref{lemma SOMP}} \label{appendix5-1}

For an arbitrary random noise matrix $\bN$, the SRP of SOMP has been characterized in \cite{zhang2021successful}. This result is general to be extended to the case in Lemma \ref{lemma SOMP},  where the entries in $\bN$ are \gls{iid} complex Gaussian.

\begin{Theorem}(SRP of SOMP with arbitrary random noise \cite{zhang2021successful}) \label{theorem SOMP random L}
Suppose the signal model provided in Lemma \ref{lemma SOMP}. Given the measurement matrix $\bPhi$ with its MIP constant satisfying $\mu< 1/(2L+1)$ and the cumulative distribution function (CDF) of $\| \bN\|_2$ satisfying
\begin{align}
\text{Pr}(\|  \bN \|_2 \le x) =  F_{N}(x), \label{def prN}
\end{align}
the SRP of SOMP in Algorithm \ref{alg_SOMP} satisfies
 \begin{align}
   \text{Pr}(\cV_S) \ge F_{N}\left(\frac{C_{\text{min}}{{(1-(2L-1)\mu )}} } {2 }\right), \label{SOMPS SRP}
  \end{align}
 where $\cV_{S}$  is the event of successful reconstruction of Algorithm \ref{alg_SOMP}, $C_{\min} = \min\limits _{i\in \Omega}  \lA [\bC]_{i,:} \rA_2$.
\end{Theorem}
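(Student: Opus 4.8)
The plan is to split the argument into a deterministic recovery condition and a one-line probabilistic translation, with all the real work concentrated in the deterministic part. First I would prove the following noise-level guarantee: if the spectral norm of the noise obeys $\| \bN \|_2 \le \tau$ with $\tau = C_{\min}(1-(2L-1)\mu)/2$, then Algorithm \ref{alg_SOMP} selects an index from the true support $\Omega$ at each of its $L$ iterations and therefore returns $\widehat{\Omega}^{(L)}=\Omega$. Granting this, the event $\cV_S$ contains $\{\|\bN\|_2\le\tau\}$, so monotonicity of probability and the definition of $F_N$ in \eqref{def prN} give $\text{Pr}(\cV_S)\ge\text{Pr}(\|\bN\|_2\le\tau)=F_N(\tau)$, which is exactly \eqref{SOMPS SRP}. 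It is essential here that $\|\bN\|_2$ is the spectral (operator) norm, since that is the quantity whose CDF $F_N$ is prescribed and also the quantity that cleanly dominates the perturbation terms below.

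Second, I would establish the deterministic guarantee by induction on the iteration index $l$. Assuming the first $l-1$ selections are correct, i.e. $\widehat{\Omega}^{(l-1)}\subseteq\Omega$, let $\bP$ be the orthogonal projector onto $\cR([\bPhi]_{:,\widehat{\Omega}^{(l-1)}})$, so that $\bR^{(l-1)}=(\bI-\bP)\bY=(\bI-\bP)[\bPhi]_{:,\Omega}[\bC]_{\Omega,:}+(\bI-\bP)\bN$. The selection rule of Algorithm \ref{alg_SOMP} picks $\eta=\argmax_k\|[\bPhi]_{:,k}^H\bR^{(l-1)}\|_2$, so correctness at step $l$ reduces to the strict inequality
\[
\max_{j\in\Omega}\,\|[\bPhi]_{:,j}^H\bR^{(l-1)}\|_2 \;>\; \max_{k\notin\Omega}\,\|[\bPhi]_{:,k}^H\bR^{(l-1)}\|_2.
\]
I would bound each side by peeling off the noise: since $\|[\bPhi]_{:,k}\|_2=1$ and $\bI-\bP$ is non-expansive, every noise contribution satisfies $\|[\bPhi]_{:,k}^H(\bI-\bP)\bN\|_2\le\|\bN\|_2$. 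The signal contributions are then controlled by the mutual-incoherence machinery \cite{DonohoMIP}: expressing the projected signal through $(\bPhi_\Omega^H\bPhi_\Omega)^{-1}$ and bounding its spectrum by Gershgorin/Neumann-series estimates in $\mu$ and $L$ shows that the best correct row-norm exceeds the largest incorrect row-norm by a noiseless gap of at least $C_{\min}(1-(2L-1)\mu)$. Combining this gap with the two $\|\bN\|_2$ slacks (one subtracted on the correct side, one added on the incorrect side) yields the strict inequality precisely when $2\|\bN\|_2\le C_{\min}(1-(2L-1)\mu)$, i.e. $\|\bN\|_2\le\tau$; the hypothesis $\mu<1/(2L-1)$ guarantees $\tau>0$, and the factor of $2$ in $\tau$ is exactly the two-sided noise slack.

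The main obstacle is the MMV-specific signal-side bookkeeping: unlike scalar OMP, the selection compares Euclidean norms of \emph{rows} of the matrix correlation $\bPhi^H\bR^{(l-1)}$, so the incoherence estimates must be carried out for matrix-valued quantities while the single governing noise scalar $\|\bN\|_2$ is kept intact. In particular, lower-bounding $\max_{j\in\Omega}\|[\bPhi]_{:,j}^H\bR^{(l-1)}\|_2$ requires verifying that projecting away the already-selected correct atoms does not deplete the residual signal energy in the remaining $L-(l-1)$ directions, and that the worst-case row over iterations still achieves the $C_{\min}$-scaled bound so that a single threshold $\tau$ serves all $L$ steps uniformly; this is where the constant $(1-(2L-1)\mu)$ is pinned down. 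Since this deterministic core is the analysis of \cite{zhang2021successful}, I would invoke it for the incoherence estimates and devote the write-up to checking that the unit-column-norm hypothesis $\|[\bPhi]_{:,i}\|_2=1$ and the operator-norm noise model make the threshold $\tau=C_{\min}(1-(2L-1)\mu)/2$ come out exactly, after which the probabilistic step is immediate.
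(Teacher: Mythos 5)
Your proposal is correct and is essentially the argument behind the statement: the paper itself gives no proof of Theorem \ref{theorem SOMP random L} but imports it from \cite{zhang2021successful}, whose proof has exactly your two-part structure --- a deterministic guarantee that SOMP recovers $\Omega$ whenever $\|\bN\|_2 \le \tau = C_{\min}(1-(2L-1)\mu)/2$ (the Cai--Wang-style MIP induction carried over to row norms of $\bPhi^H\bR^{(l-1)}$, with the two unit-column noise slacks $\|[\bPhi]_{:,k}^H(\bI-\bP)\bN\|_2\le\|\bN\|_2$ producing the factor $2$), followed by the one-line monotonicity step $\Pr(\cV_S)\ge\Pr(\|\bN\|_2\le\tau)=F_N(\tau)$. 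The only point worth flagging is cosmetic rather than a gap in your argument: the theorem's stated hypothesis $\mu<1/(2L+1)$ is stronger than the $\mu<1/(2L-1)$ you correctly identify as what is needed for $\tau>0$, an inconsistency already present in the paper between this theorem and Lemma \ref{lemma SOMP}.
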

According to the results in Theorem \ref{theorem SOMP random L}, the SRP of SOMP is characterized by the CDF of $\| \bN\|_2$. Thus,
in order to extend the result provided in Theorem \ref{theorem SOMP random L} to the case in Lemma \ref{lemma SOMP}, the CDF of $\|\bN\|_2$  is of interest when the entries of $\bN \in \C^{M\times d}$ are \gls{iid} $\cC\cN(0,\sigma^2)$.
Fortunately, according to \cite{TW2, TW1},
the CDF of the largest singular value of $\bN$ converges in distribution to the Tracy-Widom law as $M,d$ tend to  $\infty$,
\begin{align}
\text{Pr}(\| \bN \|_2 \le x)   \approx F_2\left(\frac{ {x^2}/{\sigma^2}-\mu_{M,d}}{\sigma_{M,d}} \right), \label{guanssian dis}
\end{align}
where the function $F_2(\cdot)$ is the CDF of Tracy-Widom law \cite{TW2, TW1}, $\mu_{M,d}  =(M^{1/2}  + d^{1/2}) ^2$, and $
\sigma_{M,d} = (M^{1/2}  + d^{1/2}) (M^{-1/2}  + d^{-1/2}) ^{1/3} $. Finally, after plugging the expression in \eqref{guanssian dis} into \eqref{SOMPS SRP} of Theorem \ref{theorem SOMP random L}, we obtain Lemma \ref{lemma SOMP}, which completes the proof.
\qed

\section{Proof of Proposition \ref{with noise p}} \label{proof prob noise case}
One can write the effective noise as $\tilde{\bN} = \bE + \bN$ where the entries in $\bN$ are \gls{iid} with $\cC\cN(0, \sigma^2)$.
Therefore, we have the following probability bound,
\begin{align}
\Pr\left(  \| {\tilde{\bN}}\|_2 \le x\right) &\overset{(a)}{\le} \Pr\left(\lA {\bE}\rA_2+\lA {\bN}\rA_2 \le x\right) \nonumber\\
&\overset{(b)}{\approx} F_2\left(\frac{{(x-\|  \bE\|_2)^2}/{{\sigma}^2}-\mu_{M,d}}{\sigma_{M,d}}\right),\label{dis with quantize}
\end{align}
where the inequality $(a)$ is due to the triangular inequality, and the approximation $(b)$ holds from \eqref{guanssian dis}.
Then, according to Theorem \ref{theorem SOMP random L},  plugging the expression \eqref{dis with quantize} into \eqref{SOMPS SRP} leads to
\begin{align}
\text{Pr}(\cV_S) \ge F_2\left(\frac{{\left((1-(2L-1)\mu) C_{\text{min}}-2\| \bE\|_2\right)^2}-4\sigma^2 \mu_{M,d}}{4\sigma^2 \sigma_{M,d}}\right), \nonumber
\end{align}
where $C_{\min} = \min\limits _{i\in \Omega}  \lA [\bC]_{i,:} \rA_2$. This concludes the proof.
\qed

\section{Proof of Theorem \ref{AoDs Prob}} \label{appendix5-3}

Plugging  RSB in \eqref{expression Wb2} and TSB in \eqref{expression Fb2}  into \eqref{ob 2nd} gives $\| [\bPhi_2]_{:,j}\|_2 =\sqrt{ {p_2 B_{t,2}}/{N_t}}, ~ j=1,\ldots,G_t$, and $ C_{\min} = \min _{t_l \in \Omega_t}\|  [\bC_2]_{t_l,:} \|_2 =| {{h}_{\min}} | $ with ${{t}_{l}}$ being the index of the $l$th path of $\bA_t$ in $\bar{\bA}_t$ such that $[\bar{\bA}_t]_{:,t_l}=[\bA_t]_{:,l}$, $l=1, \ldots, L$.
Hence, incorporating the latter $C_{\min}$ and $\| [\bPhi_2]_{:, j} \|_2$ into Lemma \ref{lemma SOMP} concludes the proof.
\qed

\bibliographystyle{IEEEtran}

\bibliography{IEEEabrv,Conference_mmWave_CS}

\clearpage

\end{document}